\def \F {{\mathbb F}}
\def \Q {{\mathbb Q}}
\def \Z {{\mathbb Z}}
\newtheorem{theorem}{Theorem}[section]
\newtheorem{thm}{Theorem}[section]
\newtheorem{definition}[theorem]{Definition}
\newtheorem{lemma}[theorem]{Lemma}
\newtheorem{proposition}[theorem]{Proposition}
\newtheorem{corollary}[theorem]{Corollary}
\newcommand{\fun}[3]{{{#1}\,:\,{#2}\,\rightarrow\,{#3}}}
\newcommand{\zetak}[1][2^k]{\zeta_{{#1}}}
\newcommand{\Wa}[2][]{{\mathcal W}_{#2#1}}
\newcommand{\GF}[2][2]{{\mathbb F}_{{#1}^{#2}}}
\newcommand{\ZZ}[1]{{\mathbb Z}_{#1}}
\newcommand{\QQ}{{\mathbb Q}}
\def\cB{{\mathcal B}}
\def\cC{{\mathcal C}}
\def\cH{{\mathcal H}}
\def\cW{{\mathcal W}}
\def\cGB{\mathcal{GB}}
\def\aa  {{\bf a}}
\def\bb {{\bf b}}
\def\cc {{\bf c}}
\def\dd{{\bf d}}
\def\uu{{\bf u}}
\def\vv{{\bf v}}
\def\xx  {{\bf x}}
\def\yy{{\bf y}}
\def\zz{{\bf z}}
\def\00{{\bf 0}}
\def\11{{\bf 1}}
\def\+{\oplus}
\def \F {{\mathbb F}}
\def \Q {{\mathbb Q}}
\def \Z {{\mathbb Z}}
\newcommand{\BBF}{\mathbb{F}}
\newcommand\supp{{\rm supp}}
\begin{document}

\title{\huge\bf
Landscape Boolean functions}

\author{\Large  Constanza Riera$^1$,   Pantelimon St\u anic\u a$^2$\\
\vspace{0.4cm} \\
\small $^1$Department of Computing, Mathematics, and Physics,\\
\small Western Norway University of Applied Sciences \\
\small 5020 Bergen, Norway;  Email: {\tt csr@hvl.no}\\
\small $^2$Department of Applied Mathematics, \\
\small Naval Postgraduate School, \\
\small Monterey, CA 93943, U.S.A.; \small Email: {\tt pstanica@nps.edu}
}

\date{\today}
\maketitle

\begin{abstract}
In this paper we define a class of Boolean and generalized Boolean functions defined on $\F_2^n$ with values in $\Z_q$ (mostly, we consider $q=2^k$), which we call landscape  functions (whose class containing generalized bent, semibent, and plateaued)   and find their complete characterization in terms of their components. In particular, we show  that  the previously published characterizations of generalized bent and plateaued Boolean functions are in fact particular cases of this more general setting. Furthermore, we provide an inductive construction of landscape functions,  having any number of nonzero Walsh-Hadamard coefficients. We also completely characterize generalized plateaued functions in terms of the second derivatives and fourth moments.
\end{abstract}
{\bf Keywords:} (Generalized) Boolean functions; landscape; plateaued; bent.

\section{Introduction}

Generalized Boolean functions have become an active area of research \cite{hmp,HP,KSW85,MMMS17,MMS17,gbps,KUS1,ST09,smgs,Tok}, with most of these papers dealing with descriptions/constructions of generalized bent/plateaued functions. We show here that in fact these characterizations of generalized bent/plateaued in terms of the components of the function are in fact particular  instances of the more general case of generalized landscape functions, which are introduced in this paper.

We take $\F_2^n$ to be an $n$-dimensional vector space over the two-element field $\F_2$ and for an integer $q$, let $\Z_q$ be the ring of integers modulo $q$. By `$+$' and `$-$' we respectively denote addition
and subtraction modulo $q$,  while `$\oplus$' is the addition over $\BBF_2^n$. A  {\it generalized Boolean function} on $n$ variables is a function from $\F_2^n$ to $\Z_q$ ($q \geq 2 $), whose  set is denoted by   $\cGB_n^q$ and when $q=2$, by $\cB_n$. If $2^{k-1}<q \leq  2^k$ for some $k\ge 1$, we can associate to any $f \in \cGB_n^q$ a unique sequence of Boolean functions
$a_i\in \cB_n$ ($i=0,1,\ldots,k-1$) such that
\begin{equation*}
\label{eq0.1}
f(\xx) = a_0(\xx) + 2 a_1(\xx)+\cdots+2^{k-1} a_{k-1}(\xx), \mbox{ for all } \xx   \in \F_2^n.
\end{equation*}
The ({\em Hamming}) {\em weight} of $\xx   =(x_1,\ldots,x_n)\in \F_2^n$ is denoted by
$wt(\xx) $ and equals $ \sum_{i=1}^{n}x_i$
(the Hamming weight of a function is the weight of its truth table, that is, its output vector).
%For $ \uu =(u_1,\ldots,u_n), \vv =(v_1,\ldots,v_n)\in \BBF_2^n$,
%we define the following partial order on $\BBF_2^n$:
%\[
 %\uu \preceq  \vv \ \text{if and only if } u_i\leq v_i\text{, for every }i.
%\]
The cardinality, respectively, the complement  (in a universal set understood from the context)  of a set $S$ is denoted by $|S|$, respectively, $\overline{S}$.

For a {generalized Boolean function} $f:\F_2^n\to \Z_q$ we define the (unnormalized)  {\it generalized Walsh-Hadamard transform} to be the complex valued function
\[ \mathcal{H}^{(q)}_f(\uu) = \sum_{\xx  \in \F_2^n}\zeta_q^{f(\xx)}(-1)^{\uu\cdot \xx}, \]
where $\zeta_q = e^{\frac{2\pi i}{q}}$ is a $q$-complex root of $1$ and $\langle \uu ,\xx  \rangle$ denotes the conventional dot product  on $\F_2^n$
(for simplicity, we sometimes use $\zeta$, $\cH_f$, instead of $\zeta_q$, respectively, $\cH_f^{(q)}$,  when $q$ is fixed). The map $\displaystyle\mathcal{F}_f(\uu) = \sum_{\xx  \in \F_2^n} f(\xx) (-1)^{\langle \uu ,\xx  \rangle}$ is called the Fourier transform.
For $q=2$, we obtain the usual (unnormalized) {\it Walsh-Hadamard transform}
\[ 
\mathcal{W}_f(\uu) = \sum_{\xx  \in \F_2^n}(-1)^{f(\xx)}(-1)^{\uu\cdot \xx}.
\]

The sum $$\cC_{f,g}( \zz )=\sum_{\xx   \in \F_2^n} \zeta^{f(\xx  \+ \zz )  - g(\xx)}$$
is  the {\em crosscorrelation} of $f$ and
$g$ at $ \zz  \in \F_2^n$, and 
the {\em autocorrelation} of $f \in \cB_n$ at $ \uu  \in \BBF_2^n$
is $\cC_f(\uu):=\cC_{f,f}(\uu)$.
It is known (see~\cite{smgs}) that if $f,g\in\cGB_n^q$, then,
\begin{equation*}
\label{eq3}
\begin{split}
&\sum_{ \uu  \in \F_2^n}\cC_{f,g}(\uu)(-1)^{ \uu  \cdot \xx  } = \cH_f(\xx)\overline{\cH_g(\xx)}, \\
&\cC_{f,g}(\uu) = 2^{-n}\sum_{\xx   \in \F_2^n}\cH_f(\xx)\overline{\cH_g(\xx)} (-1)^{ \uu  \cdot \xx  },\\
&\cC_f(\uu) = 2^{-n}\sum_{\xx   \in \BBF_2^n}|\cH_f(\xx)|^2 (-1)^{ \uu  \cdot \xx  }.
\end{split}
\end{equation*}

A function $f:\F_2^n\rightarrow\Z_q$ is {\em generalized bent} ({\em gbent}) if $|\mathcal{H}_f(\uu)| = 2^{n/2}$, for all $ \uu \in \F_2^n$.
This is a generalization of functions $f$ for which $|\mathcal{W}_f(\uu)| = 2^{n/2}$, for all $ \uu \in \F_2^n$, which are called {\em bent} functions. 
In the spirit of Zheng and Zhang~\cite{zz}, we say that $f\in\mathcal{B}_n^q$ is  (generalized) {\em $s$-plateaued} if
$|\mathcal{H}_f(\uu)| \in \{0,2^{(n+s)/2}\}$ for all $ \uu \in \F_2^n$ for a fixed integer $s$ depending on $f$.
If $s=1$, or $s=2$, we call $f$ (generalized) {\em semibent}. See Mesnager's excellent survey~\cite{Mes14} for more on plateaued Boolean functions. Note that, for Boolean functions, bent functions exist only when $n$ is even, while this is not necessary when $q>2$. Similarly for semibent functions, as the conditions $n$ odd for $s=1$, and $n$ even for $s=2$ are no longer necessary when $q>2$.
%With this notation a semibent function is an $s$-plateaued Boolean function with smallest possible $s>0$.

Given a generalized Boolean function $f:\F_2^n\to\Z_q$, the {\em derivative} $D_{\aa  }f$ of $f$ with respect to a vector~$\aa  $ is the generalized Boolean function $D_\aa f:\F_2^n\to\Z_q$ defined by
\begin{equation*}
 D_{a  }f(\xx) = f(\xx)- f(\xx  \oplus \aa  ), \mbox{ for  all }  \xx   \in \F_2^n.
\end{equation*}
Certainly, if $f$ is Boolean, then $D_{a  }f(\xx) = f(\xx)\+ f(\xx  \oplus \aa  )$.
For $f\in\mathcal{GB}_n^q$, we let $\supp(\cH_f)=\{\uu\,:\, \cH_f(\uu)\neq 0\}$, the {\em spectra support} of $f$. In this paper, we consider the case $q=2^k$ (of course, one can easily write the results for $2^{k-1}<q\leq 2^k$).
 
\section{Landscape functions and their regularity}

As defined in~\cite{MMS17}, a  gbent function $f\in\mathcal{GB}_n^q$  is {\em regular}, if $\mathcal{H}_f(\uu) = 2^{n/2}\zeta_q^{f^*(\uu)}$ for some function $f^*\in\mathcal{GB}_n^q$, called the dual. 
We extend this definition in the following way: we call a function $f\in\mathcal{GB}_n^q$  {\em regular}, if for all $\uu\in\supp(\cH_f)$, $\mathcal{H}_f(\uu) = 2^{\frac{n_\uu}{2}}\ell_\uu\,\zeta_q^{f^*(\uu)}$, for some $n_\uu \in\mathbb{Z}^{\geq 0}$, $\ell_\uu\in2\mathbb{Z}^{\geq 0}+1$ and some $f^*(\uu)\in\Z_q$. We define a function $f^*\in\mathcal{GB}_n^q$ by extending these values  outside of the spectra  support of $f$ by $f^*(\uu)=0$, for all $\uu\in\overline{\supp(\cH_f)}$, and call this function, the {\em dual} of $f$ (note: the function $f$ cannot be recovered from $f^*$, in general).

By modifying a method of Kumar, Scholtz and Welch~\cite{KSW85}, in~\cite{MMS17} it was shown that
all gbent functions $f\in \cGB_n^{2^k}$ are regular, except for $n$ odd and $k=2$, in which case one has
$\displaystyle \mathcal{H}_f(\uu) = 2^{\frac{n-1}{2}}(\pm 1\pm i)$.
We observe that with our definition of regularity, a function cannot be regular unless the moduli of all  nonzero Walsh-Hadamard coefficients of $f$ are of the form $2^{\frac{m_1}{2}}\ell_1$, $2^{\frac{m_2}{2}}\ell_2,\ldots$  with $m_1,m_2,\ldots\in\mathbb{Z}^{\geq 0}$, $\ell_1,\ell_2,\ldots \in2\Z^{\geq 0}+1$. With that in mind, we introduce the following notion.

\begin{definition}
We call a function $f\in\cGB_n^{q}$ a {\em landscape} function if there exist $t\geq 1$, $m_i\in\Z^{\geq 0},\ell_i\in2\Z^{\geq 0}+1$, $1\leq i\leq t$, such that
\[
\{|\cH_f(\uu)|\}_{\uu\in\supp(\cH_f)}=\{2^{\frac{m_1}{2}}\ell_1,\ldots,2^{\frac{m_t}{2}}\ell_t\}.
\]
 We call the set of pairs $\{(m_1,\ell_1),(m_2,\ell_2),\ldots\}$, the {\em levels}  of $f$, and  $t+1$ (if $0$ belongs to the Walsh-Hadamard spectrum), or $t$ (if $0$ is not in the spectrum) the {\em length} of   $f$.
\end{definition}
Certainly, every classical Boolean function is a landscape function. That is not true for $q>2$ (as the Walsh-Hadamard values are $\pm$ sums of powers of the primitive root, so the moduli of the spectra values may contain elements outside $\Z\cup\sqrt{2}\Z$), however, generalized bent, semibent, or more general, plateaued functions are all examples of landscape  functions.

In Theorem~\ref{constr-ind} we will construct, in an inductive fashion, large classes of (generalized) landscape functions $:\F_2^n\to \Z_{2^k}$, for all $k\geq 2$.

First, we show the regularity of (generalized) landscape functions, by
modifying the proof from~\cite{KSW85,MMS17}.  Recall that when $q=2^k$ is fixed, we use $\cH_f$, in lieu of $\cH_f^{(2^k)}$. 
 \begin{theorem}
\label{reg-plat}
Let $f\in\mathcal{GB}_n^q$, $q=2^k$, $k\geq 1$,   be a landscape  function, and $\zeta=e^{\frac{2\pi i}{2^k}}$ be a $2^k$-primitive root of unity.  Then, for all $\uu\in\supp(\cH_f)$, with $|\cH_f(\uu)|=2^{\frac{m}{2}}\ell$, $m \in\mathbb{Z}^{\geq 0}$, $\ell\in2\Z^{\geq 0}+1$, we have $\cH_f(\uu)=2^{\frac{m}{2}}\ell\zeta^{f^*(\uu)}$,  for some value $f^*(\uu)\in\Z_q$, except for $m$ odd and $k=2$, in which case,  $\displaystyle \mathcal{H}_f(\uu) = 2^{\lfloor{m/2}\rfloor}\left(\ell_1\epsilon_1+\ell_2\epsilon_2\pm i  (\ell_1\epsilon_2-\ell_2\epsilon_1)\right)$, for $\epsilon_1,\epsilon_2\in\{\pm 1\}$, if $\ell$ is the largest component of a Pythagorean triple $\ell_1^2+\ell_2^2=\ell^2$, or $\displaystyle \mathcal{H}_f(\uu) = 2^{\lfloor{m/2}\rfloor} (\pm 1\pm i)$, if $\ell$ is not the largest component of a Pythagorean triple. 
\end{theorem}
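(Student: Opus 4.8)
The plan is to follow the classical Kumar–Scholtz–Welch cyclotomic argument, as adapted in~\cite{MMS17}, but track the exact power of $2$ and the odd factor $\ell$ rather than only the single case $|\cH_f(\uu)|=2^{n/2}$. Fix $\uu\in\supp(\cH_f)$ and write $N:=\cH_f(\uu)=\sum_{\xx}\zeta^{f(\xx)}(-1)^{\uu\cdot\xx}$, an element of the ring of integers $\Z[\zeta]$ of the cyclotomic field $\Q(\zeta)$, $\zeta=\zeta_{2^k}$. The first step is to compute $N\overline N=|\cH_f(\uu)|^2=2^m\ell^2$. Since $\ell$ is odd and the only rational prime ramifying in $\Q(\zeta_{2^k})$ is $2$ (with $(2)=(1-\zeta)^{2^{k-1}}$, so that $2$ has ramification index $2^{k-1}$), one factors the ideal $(N)$ using unique factorization of ideals: the contribution $2^m$ forces $(1-\zeta)^{2^{k-1}m}\mid (N)\overline{(N)}$, and the odd part $\ell^2$ splits into primes above odd rationals that are unramified. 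Complex conjugation is the automorphism $\zeta\mapsto\zeta^{-1}$; since $k\ge 2$ it is nontrivial but fixes the prime $(1-\zeta)$ and permutes the primes over each odd rational. Matching valuations on both sides and using that $N$ and $\overline N$ are conjugate gives that $(N)$ itself is divisible by $(1-\zeta)^{2^{k-2}m}$ when $m$ is even (so $2^{m/2}\mid N$ in $\Z[\zeta]$), and by $(1-\zeta)^{2^{k-2}m-2^{k-3}\cdot?}$ — here one must be careful — when $m$ is odd, which is exactly where the exceptional case $k=2$ enters.

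The second step treats the generic case $k\ge 3$, or $k=2$ with $m$ even. Write $N=2^{m/2}M$ with $M\in\Z[\zeta]$ and $M\overline M=\ell^2$. Now I invoke the standard lemma (the heart of the KSW argument): if an algebraic integer $M\in\Z[\zeta_{2^k}]$ satisfies $M\overline M=\ell^2$ with $\ell$ odd, then — provided $k\ge 3$ — $M$ is a unit times a rational integer times a root of unity; more precisely one shows the ideal $(M)$ is $\overline{(M)}$-stable prime by prime (each prime over an odd $p\mid\ell$ must appear with equal exponent in $(M)$ and $(\overline M)$, but conjugation acts on these primes and, by a counting/parity argument on the decomposition group in $(\Z/2^k)^\times$, forces $(M)$ to be fixed by conjugation, hence generated by a rational integer $\ell$ up to units), and then the unit is a root of unity by Kronecker's theorem since all its archimedean absolute values are $1$. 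Since the roots of unity in $\Q(\zeta_{2^k})$ are exactly $\pm\zeta^{j}$, we get $N=2^{m/2}\ell\,(\pm\zeta^{j})=2^{m/2}\ell\,\zeta^{f^*(\uu)}$ for a suitable $f^*(\uu)\in\Z_{2^k}$ (absorbing the sign into the exponent since $-1=\zeta^{2^{k-1}}$), which is the regular case.

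The third step is the genuinely special case $k=2$, $m$ odd, where $\Q(\zeta)=\Q(i)$, $\Z[\zeta]=\Z[i]$ the Gaussian integers, and $2=-i(1+i)^2$ is ramified with the single prime $(1+i)$ of norm $2$. Here $N\overline N=2^m\ell^2$ with $m$ odd means the Gaussian integer $N$ has norm $2^m\ell^2$; writing $N=(1+i)^m M'$ with $M'$ of odd norm $\ell^2$, and using that in $\Z[i]$ every element of norm $\ell^2$ is (up to a unit $i^j$) of the form $\ell_1+i\ell_2$ with $\ell_1^2+\ell_2^2=\ell^2$ — a Pythagorean decomposition, which may be trivial $(\ell,0)$ exactly when $\ell$ is not a hypotenuse — one multiplies back by $(1+i)^m=2^{(m-1)/2}(1+i)\cdot i^{?}$ and expands $(1+i)(\ell_1+i\ell_2)=(\ell_1-\ell_2)+i(\ell_1+\ell_2)$; absorbing the four unit multiples $i^j$ and the two choices of conjugate yields precisely the stated forms $2^{\lfloor m/2\rfloor}\big(\ell_1\epsilon_1+\ell_2\epsilon_2\pm i(\ell_1\epsilon_2-\ell_2\epsilon_1)\big)$ and, in the non-hypotenuse case, $2^{\lfloor m/2\rfloor}(\pm1\pm i)$.

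The main obstacle I anticipate is the ideal-theoretic bookkeeping in Steps 1–2: correctly pinning down the exponent of the ramified prime $(1-\zeta)$ in $(N)$ from the knowledge of $(N)\overline{(N)}$ and the action of complex conjugation — equivalently, showing that the "$2$-part'' of $N$ is $2^{m/2}$ exactly when $m$ is even and that odd $m$ is incompatible with regularity only for $k=2$ — and then the parity argument showing the odd-norm factor $M$ has conjugation-stable ideal so that it is a rational integer up to a root of unity. For $k\ge 3$ one needs that $-1$ is not a norm obstruction, i.e. that $[\Q(\zeta_{2^k}):\Q(\zeta_{2^k})^+]=2$ and the relevant unit index behaves well; this is exactly the content already extracted in~\cite{KSW85,MMS17}, so I would cite it and only redo the valuation count that is new here, namely the passage from a single modulus $2^{n/2}$ to the general modulus $2^{m/2}\ell$.
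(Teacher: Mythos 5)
Your overall route is the same as the paper's: the Kumar--Scholtz--Welch cyclotomic argument, localized at each $\uu$ and tracking the modulus $2^{\frac m2}\ell$ instead of $2^{n/2}$. Your Step~1 (the valuation count at the totally ramified prime $(1-\zeta)$, which isolates the exceptional case $k=2$, $m$ odd) is sound, and your Step~3 (the Gaussian-integer analysis for $k=2$, $m$ odd) is correct and equivalent to the paper's elementary use of the identity $(a^2+b^2)(c^2+d^2)=(ac+bd)^2+(ad-bc)^2$. The genuine gap is the ``standard lemma'' you invoke in Step~2. It is not true that an algebraic integer $M\in\Z[\zeta_{2^k}]$ with $M\overline M=\ell^2$ ($\ell$ odd) must have a conjugation-stable ideal, and the ``counting/parity argument on the decomposition group'' you appeal to does not exist: complex conjugation $\sigma_{-1}$ fixes each prime of $\Z[\zeta_{2^k}]$ above an odd rational $p$ only when $-1\in\langle p\bmod 2^k\rangle$, and this already fails for $p=5$, $k=3$. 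Concretely, in $\Q(\zeta_8)$ one has $5=\mathfrak p_1\mathfrak p_2$ with $\sigma_{-1}(\mathfrak p_1)=\mathfrak p_2$; choosing a generator $\pi$ of $\mathfrak p_1$ normalized (by a totally positive unit of $\Z[\sqrt 2]$) so that $\pi\overline\pi=5$, the element $M=\pi^2$ satisfies $M\overline M=25$ under every embedding, yet $(M)=\mathfrak p_1^2\neq(5)$, so $M$ is not $5$ times a root of unity. Knowing only that $(M)(\overline M)=(\ell)^2$ pins down $v_{\mathfrak p}(M)+v_{\overline{\mathfrak p}}(M)$ for each prime $\mathfrak p\mid\ell$, not the individual valuations, so nothing forces $(M)=(\overline M)$.

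To be fair, you have reproduced the paper's own weak point rather than introduced a new one: the paper obtains $(\cH_f(\uu))=(\overline{\cH_f(\uu)})$ by citing~\cite[Property 7]{KSW85}, but the proof of that property uses that the only prime dividing the Walsh value is the unique (hence conjugation-fixed) ramified prime above $2$ --- i.e., it covers only $\ell=1$; everything downstream (the unit $2^{-m}\ell^{-2}\cH_f(\uu)^2$, integrality of $2^{-m/2}\ell^{-1}\cH_f(\uu)$, Kronecker's theorem) is fine once that step is granted. The difference is that by trying to spell out a justification, you have made the missing step visible, and the justification you offer is false in general. Once $\ell>1$, the Pythagorean phenomenon you correctly isolate for $k=2$, $m$ odd can in principle occur for $m$ even and for $k\geq 3$ as well (whenever some $p\mid\ell$ has $-1\notin\langle p\bmod 2^k\rangle$), so this step cannot be repaired by a purely local ideal-theoretic argument at a single $\uu$; one would need either a global property of landscape functions ruling out such Walsh values, or a restriction on $\ell$ (e.g., to products of primes $p$ with $-1\in\langle p\bmod 2^k\rangle$). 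Neither appears in your write-up, so as it stands Step~2 is a gap.
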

\begin{proof}
If $k=1$, the result simply states that if $\uu\in\supp(\cW_f)$ and $|\cW_f(\uu)|= 2^{\frac{m}{2}}\ell$, then $\cW_f(\uu)=2^{\frac{m}{2}} \ell (-1)^{f^*(\uu)}$, which is certainly true, since the two roots of unity are $\pm 1$. 

Let $k\geq 2$, $\uu\in\supp(\cH_f)$ with $|\cH_f(\uu)|=2^{\frac{m}{2}}\ell$ (recall that $m,\ell \in\Z^{\geq 0}$, $\ell$ odd), and assume that $m$ is even, or, $m$ is odd and $k\neq 2$.  
 As in~\cite{KSW85,MMS17}, the ideal generated by $2$ is totally ramified in $\Z[\zeta]$ (which is the ring of algebraic integers in the cyclotomic field~$\Q(\zeta)$), so we have the decomposition  in $\Z[\zeta]$  of the ideal $\displaystyle \langle 2\rangle= \langle 1-\zeta\rangle^{2^{k-1}}$, where $\langle 1-\zeta\rangle$ is a prime ideal in $\Z[\zeta]$. 
%The decomposition group
%\[
%G_2=\{\sigma\ \text{in the Galois group of $\Q(\zeta)/\Q$} \mid \sigma(\langle 1-\zeta\rangle)=\langle 1-\zeta\rangle\}
 %\]
 %contains also the conjugation isomorphism $\sigma^*(z)=z^{-1}$ (Proposition 2 in~\cite{KSW85}).
 Observe that $\cH_f(\uu)\overline{\cH_f(\uu)}=2^{m}\ell^2$. From~\cite[Property 7]{KSW85}, we observe 
 that 
 %our generalized Walsh-Hadamard transform is  $S(f,2^{k-1} u)$
 %(in the notations of Kumar et al.~\cite{KSW85}), then 
 $\cH_f(\uu)$ and $\overline{\cH_f(\uu)}$ will generate the same ideal in
 $\Z[\zeta]$ and so, $2^{-m}\ell^{-2}(\cH_f(\uu))^2$ is a unit, and consequently, $2^{-{\frac{m}{2}}} \ell^{-1}\cH_f(\uu)$ is an algebraic integer.  Therefore, by Proposition 1 of~\cite{KSW85},
 %(which restates one of Kronecker's results from 1857), 
 $2^{-{\frac{m}{2}}}\ell^{-1} \cH_f(\uu)$ is a root of unity.  
 Further, observe that the Gauss quadratic sum
 $\displaystyle G(2^k)=\sum_{i=0}^{2^k-1} \zeta^{i^2}= 2^{k/2}(1+i)$ and so, $\sqrt{2}\in\Q(\zeta)$, and so the root of unity  $2^{-{\frac{m}{2}}} \ell^{-1}\cH_f(\uu)$ must be in the  cyclotomic field $\Q(\zeta)$, unless $k=2$ (since then ${1+i}\not\in \Q(\zeta)$).  The first assertion is shown for $m$ even, as well as for $m$ odd with $k\neq 2$.

When $m$ is odd and $k=2$,   then $\cH_f(\uu)=a_\uu+b_\uu\,i$,
for some integers $a_\uu, b_\uu$. Since $|\cH_f(\uu)|^2=2^{m}\ell^2$,  we get the diophantine
equation $a_\uu^2 + b_\uu^2 = 2^{m}\ell^2$. % If $m$ is even, the equation has only  solutions
%$(a_\uu, b_\uu) = (\pm 2^{\frac{m}{2}}\ell, 0)$, or $(0,\pm 2^{\frac{m}{2}}\ell)$. 
We use now the fact that the product of sums of squares is a sum of squares, that is,
\[
(a^2+b^2)(c^2+d^2)=(ac+bd)^2+(ad-bc)^2.
\]
Since $m$ is odd, the solutions for $x^2+y^2=2^m$ are $(x,y)=(\epsilon_1 2^{\lfloor{m/2}\rfloor},\epsilon_2 2^{\lfloor{m/2}\rfloor})$, with $\epsilon_1,\epsilon_2\in\{\pm 1\}$, and if $\ell$ is the largest component of a Pythagorean triple $(\ell_1,\ell_2,\ell)$, all solutions of $a_\uu^2 + b_\uu^2 = 2^{m}\ell^2$ are of the form
$(a_\uu, b_\uu) = (2^{\lfloor{m/2}\rfloor}(\ell_1\epsilon_1+\ell_2\epsilon_2),\pm 2^{\lfloor{m/2}\rfloor}(\ell_1\epsilon_2-\ell_2\epsilon_1))$,  
and the theorem is shown.
\end{proof}
%We can adapt the regularity proof of~\cite{txqf}  (which is somewhat similar to the one from~\cite{MMMS17}) for gbent functions to show the above result.

%\section{Characterizing plateaued functions in terms of components: even dimension}

%We define the ``canonical injection'' (which is in fact a bijection) $\fun{\iota} {\F_2^{k-1}}{\ZZ{2^{k-1}}}$  by $\iota(\cc) = \sum_{j=0}^{k-2}c_j 2^j$ where $\cc=(c_0,c_1,\dots,c_{k-2})$.  

%\begin{theorem}
%\label{even-plat}
%Let $f:\F_2^n\rightarrow\Z_{2^k}$, $n$ even, given as
%$f(\xx) = a_0(\xx)+2a_1(\xx)+\cdots+ 2^{k-1}a_{k-1}(\xx)$. Then $f$ is an $s$-plateaued
%function  if and only
%if, for each $c\in\F_2^{k-1}$, the Boolean function $f_c$ defined
%as
%\[ f_c(\xx) =
%c_0a_0(\xx)\+c_1a_1(\xx)\+\cdots\+c_{k-2}a_{k-2}(\xx)\+a_{k-1}(\xx) \]
%is an $s$-plateaued function with $ \Wa{f_\cc}(\aa)= (-1)^{\cc\cdot \iota^{-1}( g(\aa))+s(\aa)}2^{\frac{n+s}{2}}$, for some
%$\fun {g}{\F_2^n}{\ZZ{2^{k-1}}}$, $\fun {s}{\F_2^n}{\F_{2}}$.
%\end{theorem}

  \section{Characterizing landscape   functions in terms of components}
\label{main-comp}

In this section, we will completely characterize the landscape   functions in terms of their components, by using the method of~\cite{MMMS17}. It is rather intriguing that bentness does not play a role in the method rather the modulus of  values of the Walsh-Hadamard spectrum  being of the form $2^{\frac{m}{2}}\ell$ is important.

We define the ``canonical bijection''  $\fun{\iota} {\F_2^{k-1}}{\ZZ{2^{k-1}}}$  by $\iota(\cc) = \sum_{j=0}^{k-2}c_j 2^j$ where $\cc=(c_0,c_1,\dots,c_{k-2})$. 
We gather in the next lemma some computations from~\cite{MMMS17,smgs,txqf}, providing a relationship between the generalized Walsh-Hadamard transform and the classical transform.
\begin{lemma}
\label{lem:H-W}
For a generalized Boolean $f\in\cGB_n^{2^k}$, $f(\xx) = a_0(\xx)+2a_1(\xx)+\cdots+ 2^{k-1}a_{k-1}(\xx)$, $a_i\in\cB_n$, we have
\begin{align*}
\label{eq:relationGWaGray}
  \cH_{f} (\aa)
  &= \frac{1}{2^{k-1}} \sum_{(\cc,\dd)\in\F_ {2}^{k-1}\times\F_{2}^{k-1}} (-1)^{\cc\cdot \dd}\zetak^{\iota(\dd)}\,    \cW_{f_\cc}(\aa),
\end{align*}
where $f_\cc(\xx) = c_1a_1(\xx)\+c_2a_2(\xx)\+\cdots\+c_{k-2}a_{k-2}(\xx)\+a_{k-1}(\xx)$.
\end{lemma}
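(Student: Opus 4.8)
The plan is to establish the identity pointwise in $\xx$ and then sum against the linear character $(-1)^{\aa\cdot\xx}$. Concretely, I would first reduce the lemma to the single assertion that, for every fixed $\xx\in\F_2^n$,
\[
\zetak^{f(\xx)}=\frac{1}{2^{k-1}}\sum_{(\cc,\dd)\in\F_2^{k-1}\times\F_2^{k-1}}(-1)^{\cc\cdot\dd}\,\zetak^{\iota(\dd)}\,(-1)^{f_\cc(\xx)}.
\]
Granting this, multiply both sides by $(-1)^{\aa\cdot\xx}$ and sum over $\xx\in\F_2^n$: the left side becomes $\cH_f(\aa)$ by definition, and on the right side the coefficients $(-1)^{\cc\cdot\dd}\zetak^{\iota(\dd)}$ are independent of $\xx$, so by interchanging the (finite) sums they pull out and leave $\sum_{\xx}(-1)^{f_\cc(\xx)}(-1)^{\aa\cdot\xx}=\cW_{f_\cc}(\aa)$, which is exactly the stated formula.

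To prove the pointwise identity, fix $\xx$ and write $b_i:=a_i(\xx)\in\F_2$. Because $\zetak^{2^{k-1}x}\in\{\pm1\}$ for $x\in\F_2$, the \emph{top} component contributes only a sign, $\zetak^{2^{k-1}b_{k-1}}=(-1)^{b_{k-1}}$, and correspondingly $f_\cc(\xx)=\big(\bigoplus_{i=0}^{k-2}c_ib_i\big)\oplus b_{k-1}$, so $(-1)^{f_\cc(\xx)}=(-1)^{b_{k-1}}(-1)^{\cc\cdot\bb'}$ with $\bb'=(b_0,\dots,b_{k-2})$. Hence $(-1)^{\cc\cdot\dd}(-1)^{f_\cc(\xx)}=(-1)^{b_{k-1}}(-1)^{\cc\cdot(\dd\oplus\bb')}$, and I would perform the $\cc$-summation first: orthogonality of the characters of $\F_2^{k-1}$ gives $\sum_{\cc}(-1)^{\cc\cdot(\dd\oplus\bb')}=2^{k-1}$ if $\dd=\bb'$ and $0$ otherwise. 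The right side therefore collapses to $(-1)^{b_{k-1}}\zetak^{\iota(\bb')}=(-1)^{b_{k-1}}\zetak^{b_0+2b_1+\cdots+2^{k-2}b_{k-2}}=\zetak^{b_0+2b_1+\cdots+2^{k-1}b_{k-1}}=\zetak^{f(\xx)}$, which is what we wanted.

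The computation is essentially routine; the only point that needs care is the asymmetric role of the two boundary components. The top bit $a_{k-1}$ must be handled separately since $\zetak^{2^{k-1}\,\cdot}$ already lands in $\{\pm1\}$, so it sits outside the Fourier-expansion machinery and is the term that is always XOR-ed into $f_\cc$; the remaining bits $a_0,\dots,a_{k-2}$ are precisely the ones indexed by $\cc$ (so the first summand of $f_\cc$ should read $c_0a_0(\xx)$, matching $\cc=(c_0,\dots,c_{k-2})$). One also has to keep straight that addition in the exponent of $\zetak$ is taken mod $2^k$ while the $\oplus$ in $f_\cc$ is mod $2$; these are compatible here only because the single surviving $\pm1$-factor corresponds to a root of unity of order $2$. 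An alternative, slightly longer route is induction on $k$, writing $f=a_0+2g$ with $g\in\cGB_n^{2^{k-1}}$, applying the inductive formula to $g$, and using $\zetak^{a_0(\xx)}=\frac{1+\zetak}{2}+\frac{1-\zetak}{2}(-1)^{a_0(\xx)}$; but the direct orthogonality argument above is cleaner and I would use it.
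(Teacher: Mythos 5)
Your proof is correct. The paper itself gives no proof of this lemma (it is imported from \cite{MMMMS17,smgs,txqf} --- more precisely, from the cited references \cite{MMMS17,smgs,txqf}), and the argument you give --- the pointwise identity $\zetak^{f(\xx)}=\frac{1}{2^{k-1}}\sum_{\cc,\dd}(-1)^{\cc\cdot\dd}\zetak^{\iota(\dd)}(-1)^{f_\cc(\xx)}$ via character orthogonality on $\F_2^{k-1}$, followed by summing against $(-1)^{\aa\cdot\xx}$ --- is exactly the standard derivation used there, so there is nothing methodologically different to compare. You are also right to flag the indexing: with $\cc=(c_0,\dots,c_{k-2})$ the definition of $f_\cc$ must read $c_0a_0(\xx)\+c_1a_1(\xx)\+\cdots\+c_{k-2}a_{k-2}(\xx)\+a_{k-1}(\xx)$; as literally printed (starting at $c_1a_1$ and omitting $c_0a_0$), the $c_0$-summation forces $d_0=0$ and the right-hand side collapses to $\zetak^{2a_1(\xx)+\cdots+2^{k-1}a_{k-1}(\xx)}$, so the identity would fail whenever $a_0\not\equiv 0$. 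This is an index-shift typo inherited from the source (where the $a_i$ are numbered from $1$), and your corrected reading is the one the rest of the paper actually uses.
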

 
\begin{theorem}
\label{maj-thm}
Let $f:\F_2^n\rightarrow\Z_{2^k}$, $k\geq 2$,  be a  function given as
$f(\xx) = a_0(\xx)+2a_1(\xx)+\cdots+ 2^{k-1}a_{k-1}(\xx)$.  
Then, $f$ is a landscape   function whose  spectra moduli are in $\left\{0,2^{\frac{m_1}{2}}L_1,\ldots,2^{\frac{m_t}{2}}L_t\right\}$ ($t\in\Z^{>0},m_i\in\Z^{\geq 0}$, $L_i\in2\Z^{\geq 0}+1$) if and only if for each $\cc\in\F_2^{k-1}$, the Boolean function $f_\cc$ defined as
\[ f_\cc(\xx) = c_1a_1(\xx)\+c_2a_2(\xx)\+\cdots\+c_{k-2}a_{k-2}(\xx)\+a_{k-1}(\xx) \]
is a Boolean   function such that (we take $\ell\in\{L_1,\ldots,L_t\}$): 
\begin{itemize}
\item[$(i)$]   $\cH_f(\aa)=0$, if and only if $ \Wa{f_\cc}(\aa)=0$.
\item[$(ii)$] $|\cH_f(\aa)|=2^{\frac m2}\ell$, $m$ even, if and only if $ \Wa{f_\cc}(\aa)= (-1)^{\cc\cdot \iota^{-1}( g(\aa))+s(\aa)}2^{\frac{m}{2}}\ell$, for some
$\fun {g}{\F_2^n}{\ZZ{2^{k-1}}}$, $\fun {s}{\F_2^n}{\F_{2}}$.
\item[$(iii)$] $|\cH_f(\aa)|=2^{\frac m2}\ell$, $m$ odd, $k\neq2$, if and only if 
\[
\Wa{f_\cc}(\aa)= \left((-1)^{\cc\cdot \iota^{-1}(g_1(\aa))+s_1(\aa)}-(-1)^{\cc\cdot \iota^{-1}(g_2(\aa))+s_2(\aa)}\right)\,2^{\lfloor\frac {m}{2}\rfloor}\ell, 
\]
 for some $g_j:\F_{2}^n\to \Z_{2^{k-1}}, s_j:\F_{2}^n\to\F_2$, $j=1,2$,
where $g_2(\aa)-g_1(\aa)+2^{k-1}(s_2(\aa)-s_1(\aa))=2^{k-2}$ in $\Z_{2^k}$.
% and for all $\cc\in\Z_{2^{k-1}}$,
%$\cc\cdot \left(\iota^{-1}(g_1(\aa))\+\iota^{-1}(g_2(\aa)) \right)\+s_1(\aa)\+s_2(\aa)$ is balanced as a function of $\aa$.
\item[$(iv)$] $|\cH_f(\aa)|=2^{\frac m2}\ell$, $m$ odd, $k=2$, if and only if   (note that $c$ is a bit)
\begin{align*}
\Wa{f_c}(\aa)=
\begin{cases} 
2^{\frac{m-1}{2}}\left(\ell_1\epsilon_1+\ell_2\epsilon_2\pm(-1)^c (\ell_1\epsilon_2-\ell_2\epsilon_1) \right), &\text{ if }  \ell_1^2+\ell_2^2=\ell^2\\
 2^{\frac{m-1}{2}}\ell\,\left(\pm1\pm(-1)^c\right), &\text{ otherwise}.
\end{cases}
\end{align*}
\end{itemize} 
Consequently, $f_\cc$ has nonzero spectra moduli given by $\left\{ 2^{\lceil{\frac{m_1}{2}}\rceil}L_1,\ldots,2^{\lceil{\frac{m_t}{2}}\rceil}L_t\right\}$.
\end{theorem}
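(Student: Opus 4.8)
The plan is to prove Theorem~\ref{maj-thm} by inverting the relation of Lemma~\ref{lem:H-W} and applying the regularity statement of Theorem~\ref{reg-plat} coefficient by coefficient. First I would record the inverse of the linear map in Lemma~\ref{lem:H-W}: since the $(k-1)\times(k-1)$ sign matrix $\bigl((-1)^{\cc\cdot\dd}\bigr)$ combined with the diagonal of $\zetak^{\iota(\dd)}$ is invertible, one recovers
\[
\cW_{f_\cc}(\aa)=\sum_{\dd\in\F_2^{k-1}}(-1)^{\cc\cdot\dd}\,\zetak^{-\iota(\dd)}\,\cH_{h_\dd}(\aa)
\]
for a suitable family, or more directly one uses the orthogonality $\sum_{\cc}(-1)^{\cc\cdot(\dd\oplus\dd')}=2^{k-1}[\dd=\dd']$ to extract each $\cW_{f_\cc}(\aa)$ as a $\Z[\zetak]$-combination of the $\cH_f(\aa)$. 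The point is that $\aa\mapsto\cW_{f_\cc}(\aa)$ and $\aa\mapsto\cH_f(\aa)$ determine each other pointwise in $\aa$, so the whole theorem reduces to a statement about a single fixed value $\aa$.

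Next I would fix $\aa$ and split into the cases $(i)$–$(iv)$ according to $|\cH_f(\aa)|$. For the forward direction, assume $f$ is landscape with $|\cH_f(\aa)|\in\{0,2^{m/2}\ell\}$; by Theorem~\ref{reg-plat} the value $\cH_f(\aa)$ has the stated form ($2^{m/2}\ell\,\zetak^{f^*(\aa)}$ when $m$ is even or $k\ne 2$, and the Pythagorean-triple expression when $m$ odd and $k=2$). Substituting this into the inversion formula and computing the resulting $\Z[\zetak]$-element, I would check that its coefficient vector, when read through the canonical bijection $\iota$, produces exactly a real number of the claimed shape: writing $f^*(\aa)=\iota(g(\aa))+2^{k-1}s(\aa)$ with $g(\aa)\in\Z_{2^{k-1}}$ and $s(\aa)\in\F_2$ isolates the ``Gray part'' $g$ and the ``sign part'' $s$, and the sign $(-1)^{\cc\cdot\iota^{-1}(g(\aa))+s(\aa)}$ falls out of the $(-1)^{\cc\cdot\dd}$ weights. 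The constraint in $(iii)$, namely $g_2(\aa)-g_1(\aa)+2^{k-1}(s_2(\aa)-s_1(\aa))=2^{k-2}$, is precisely the condition that the two root-of-unity summands in $2^{\lfloor m/2\rfloor}\ell(\zeta^{\alpha}-\zeta^{\beta})=2^{m/2}\ell\zeta^{f^*(\aa)}$ differ by a factor of $i=\zetak^{2^{k-2}}$, which is forced because $(1-i)$ has $2$-adic valuation $1/2$ and must supply the extra $\sqrt2$ when $m$ is odd. Case $(iv)$ is the direct transcription of the Pythagorean-triple clause of Theorem~\ref{reg-plat} for $k=2$, where $\Z[\zetak]=\Z[i]$ and $\iota:\F_2\to\Z_2$ is the identity on one bit.

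For the converse I would run the same computation in the other direction: given that each $\cW_{f_\cc}(\aa)$ has the prescribed form, plug into Lemma~\ref{lem:H-W} and verify that the weighted sum collapses to $2^{m/2}\ell\,\zetak^{f^*(\aa)}$ (or the $k=2$ exception), so $|\cH_f(\aa)|=2^{m/2}\ell$; the case $\cH_f(\aa)=0\iff\cW_{f_\cc}(\aa)=0$ follows because the transform matrix is invertible, so a zero row on one side forces a zero row on the other. Finally, the concluding sentence — that $f_\cc$ has nonzero spectra moduli $\{2^{\lceil m_1/2\rceil}L_1,\dots\}$ — is read off directly: in the even-$m$ case $|\cW_{f_\cc}(\aa)|=2^{m/2}\ell=2^{\lceil m/2\rceil}\ell$, and in the odd-$m$ case the two-term expressions in $(iii)$ and $(iv)$ have absolute value $2^{\lfloor m/2\rfloor+1}\ell=2^{\lceil m/2\rceil}\ell$ (for the Pythagorean clause one uses $(\ell_1\epsilon_1+\ell_2\epsilon_2)^2+(\ell_1\epsilon_2-\ell_2\epsilon_1)^2=2(\ell_1^2+\ell_2^2)=2\ell^2$). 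I expect the main obstacle to be the bookkeeping in case $(iii)$: correctly identifying which pairs $(g_1,s_1),(g_2,s_2)$ occur, proving the difference relation $g_2-g_1+2^{k-1}(s_2-s_1)=2^{k-2}$ is both necessary and sufficient, and ruling out degenerate collapses of the two terms — this is where one must argue carefully with $2$-adic valuations in the totally ramified ring $\Z[\zetak]$ rather than just manipulating exponents formally.
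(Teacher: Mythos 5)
Your plan follows essentially the same route as the paper's proof: fix $\aa$, invert the relation of Lemma~\ref{lem:H-W} via the orthogonality $\sum_{\cc}(-1)^{\cc\cdot(\dd\oplus\dd')}=2^{k-1}\delta_0(\dd,\dd')$ together with the fact that $\{1,\zeta,\dots,\zeta^{2^{k-1}-1}\}$ is a $\Q$-basis of $\Q(\zeta)$, feed in the form of $\cH_f(\aa)$ supplied by Theorem~\ref{reg-plat}, split $f^\star$ into a Gray part $g$ and a sign part $s$ (with the two exponents in the odd-$m$ case coming from $\sqrt2=\zeta^{2^{k-3}}-\zeta^{3\cdot2^{k-3}}$, which is exactly why they differ by $2^{k-2}$), and prove the converses by substituting back into Lemma~\ref{lem:H-W}. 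The one loose spot is your justification of the concluding sentence in the $k=2$ Pythagorean subcase: the identity $(\ell_1\epsilon_1+\ell_2\epsilon_2)^2+(\ell_1\epsilon_2-\ell_2\epsilon_1)^2=2\ell^2$ controls the sum of squares of the two components, not the modulus of each individual $\cW_{f_c}(\aa)$ (e.g.\ $\ell=5$, $\ell_1=3$, $\ell_2=4$ gives values $6$ and $8$ rather than $2\cdot5$), but this is a looseness already present in the theorem's final sentence, which the paper's own proof likewise does not address.
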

\begin{proof}
$(i)$ First, let us treat the case of $\aa\notin\supp(\cH_f)$. Thus, 
\[
0=  \sum_{(\cc,\dd)\in\F_ {2}^{k-1}\times\F_{2}^{k-1}} (-1)^{\cc\cdot \dd}\zetak^{\iota(\dd)}\,    \cW_{f_\cc}(\aa)=\sum_{\dd\in\F_2^{k-1}} \left(\sum_{\cc\in\F_2^{k-1}} (-1)^{\cc\cdot \dd}\cW_{f_\cc}(\aa) \right)  \zetak^{\iota(\dd)},
\]
and so, $\displaystyle \sum_{\cc\in\F_2^{k-1}} (-1)^{\cc\cdot \dd}\cW_{f_\cc}(\aa)=0$, 
since $\{1,\zetak,\dots,\zetak^{2^{k-1}-1}\}$ is a basis of $\QQ(\zetak)$. 
Inverting, we get
\begin{align*}
  \cW_{f_\cc}(\aa)= \frac{1}{2^{k-1}}  \sum_{(\uu,\vv)\in\F_ 2^{k-1}} (-1)^{(\uu + \cc)\cdot \vv}
   \cW_{f_{\uu}}(\aa)=0.
\end{align*}
The converse follows easily.

$(ii)$ Let now $\aa\in\supp(\cH_f)$ with $|\cH_f(\aa)|=2^{\frac{m}{2}}\ell$, $m\in\Z$ even, $\ell$ odd.
Then,  $\fun f{\F_2^n}{\ZZ{2^k}}$ satisfies
$
  \cH_{f}(\aa) = 2^{\frac m2}\ell\,\zetak^{f^\star(\aa)}
$
for some $\fun {f^\star}{\F_2^n}{\ZZ{2^k}}$. Decompose $f^\star$ as $f^\star = g+2^{k-1}s$ with
$\fun {g}{\F_2^n}{\ZZ{2^{k-1}}}$ and $\fun {s}{\F_2^n}{\F_{2}}$ so that
\begin{align*}
  \cH_{f}(\aa) = 2^{\frac m2}\ell\,(-1)^{s(\aa)}\zetak^{g(\aa)}.
\end{align*}
Then,
\begin{equation}
\begin{split}
\label{eq:vanishing1}
&  \sum_{\dd\in\F_{2}^{k-1}}\left( \frac{1}{2^{k-1}} \sum_{\cc\in\F_ {2}^{k-1}} (-1)^{\cc\cdot \dd}
  \cW_{f_\cc}(\aa)\right)\zetak^{\iota(\dd)}  - 2^{\frac m2}\ell(-1)^{s(\aa)}\zetak^{g(\aa)} = 0.
\end{split}
\end{equation}

Again using that $\{1,\zetak,\dots,\zetak^{2^{k-1}-1}\}$ is a basis of $\QQ(\zetak)$ (denoting by $\delta_0$ the Dirac symbol $\delta_0(u,v)=1$  if $u=v$, and 0, otherwise), we infer
\begin{equation}
\label{eq:valeur1}
  \frac{1}{2^{k-1}} \sum_{\cc\in\F_2^{k-1}} (-1)^{\cc\cdot \dd}
  \cW_{f_\cc}(\aa) = 2^{\frac m2}\ell\,(-1)^{s(\aa)} \delta_0\left( \iota(\dd),g(\aa)\right).
  %\begin{cases}
    %0 & \mbox{if $\iota(\dd)\not =g(a)$}\\
    %2^{\frac m2}\ell^2(-1)^{s(a)} & \mbox{if $\iota(\dd)=g(a)$}.
    %\end{cases}
\end{equation}

We now invert the above identity,  so, for any $\cc\in\F_2^{k-1}$,
\begin{align*}
  \cW_{f_\cc}(\aa)
  &= \frac{1}{2^{k-1}}  \sum_{(\uu,\vv)\in\F_ 2^{k-1}} (-1)^{(\uu + \cc)\cdot \vv}
   \cW_{f_{\uu}}(\aa)\\
  &=\sum_{\vv\in\F_ 2^{k-1}}(-1)^{\cc\cdot \vv}\left(\frac{1}{2^{k-1}}
    \sum_{\uu\in\F_ 2^{k-1}} (-1)^{\uu\cdot \vv}
   \cW_{f_\uu}(\aa)\right)\\
  &= (-1)^{\cc\cdot \iota^{-1}(g(\aa))+s(\aa)}2^{\frac m2}\ell.
\end{align*}
 That shows that $f_\cc$ satisfies the imposed conditions on the Walsh-Hadamard coefficient at $\aa$.

Conversely, suppose that there exist
$\fun {g}{\F_2^n}{\ZZ{2^{k-1}}}$ and $\fun{s}{\F_2^n}{\F_{2}}$
such that, for every $\cc\in\F_{2}^{k-1}$,
$
 \cW_{f_\cc}(\aa) = 2^{\frac m2}\ell(-1)^{\cc\cdot \iota^{-1}(g(\aa))+s(\aa)}
$.
%(certainly, if $ \cW_{f_\cc}(\aa)=0$, then Lemma~\ref{lem:H-W} easily implies the result).
By Lemma~\ref{lem:H-W},  we can write
\allowdisplaybreaks
\begin{align*}
  &\cH_{f} (\aa)
  = \frac{1}{2^{k-1}} \sum_{(\cc,\dd)\in\F_ 2^{k-1}\times\F_ 2^{k-1}} (-1)^{\cc\cdot \dd}\zetak^{\iota(\dd)}
   \cW_{f_\cc}(\aa)\\
  =& 2^{\frac m2}\ell\cdot\frac{1}{2^{k-1}}
    \sum_{(\cc,\dd)\in\F_2^{k-1}\times\F_ 2^{k-1}} (-1)^{\cc\cdot \dd+\cc\cdot \iota^{-1}(g(\aa))+s(\aa)}\zetak^{\iota(\dd)}\\
  = &2^{\frac m2} \ell\,(-1)^{s(\aa)}
    \sum_{\dd\in\F_ 2^{k-1}}\left(\frac{1}{2^{k-1}}\sum_{\cc\in\F_{2}^{k-1}}(-1)^{\cc\cdot (\dd+ \iota^{-1}(g(\aa)))}\right)\zetak^{\iota(\dd)}\\
   =&  2^{\frac m2} \ell\,(-1)^{s(\aa)} \zetak^{g(\aa))},
\end{align*}
proving that $f$ satisfies $|\cH_f(\aa)|=2^{\frac{m}{2}}\ell$.

$(iii)$ Now, let $\aa\in\supp(\cH_f)$, with $|\cH_f(\aa)|=2^{\frac{m}{2}}\ell$, $m,\ell\in\Z^{\geq 1}$ odd, $k\neq2$.
  %Since $2^{-\frac m2}\cH_f(\aa)$ is a root of   unity (see 
By Theorem~\ref{reg-plat}, then    
  \begin{align*}
    \cH_{f}(\aa)
     &= 2^{\frac m2}\ell\,\zetak^{f^\star(\aa)}
       = 2^{\frac{m-1}2}\, \sqrt{2}\, \ell\,\zetak^{f^\star(\aa)},
  \end{align*}
  for some power ${f^\star}(\aa) \in{\ZZ{2^k}}$.  Recall now that
  $\mathbb Q(\sqrt 2)\subset\mathbb Q (\zetak)$, since
  $\sqrt 2=\zetak[8] +\bar{\zetak[8]}=\zetak[8]-\zetak[8]^3=\zetak^{2^{k-3}}-\zetak^{3\cdot
    2^{k-3}}$. Thus,
\begin{align*}
  \cH_{f}(\aa)
  &=  2^{\frac{m-1}2}\ell\,\left( \zetak^{f^\star(\aa)+2^{k-3}} - \zetak^{f^\star(\aa)+3\cdot 2^{k-3}}\right).
\end{align*}
As in~\cite{MMMS17}, we let $f^\star(\aa)+2^{k-3}=g_1(\aa)+2^{k-1}s_1(\aa) + 2^kt_1(\aa)$ and
$f^\star(\aa)+3\cdot 2^{k-3}=g_2(\aa)+2^{k-1}s_2(\aa)+2^k t_2(\aa)$, where $g_i:\F_{2^n}\rightarrow\Z_{2^{k-1}}$ and $s_i, t_i:\F_{2^n}\rightarrow\F_2$, so that
\begin{align}
\label{eq:m-odd}
  \cH_{f}(\aa) = 2^{\frac{m-1}2}\ell\,(-1)^{s_1(\aa)}\zetak^{g_1(\aa)} -2^{\frac{m-1}2}\ell\,(-1)^{s_2(\aa)} \zetak^{g_2(\aa)}.
\end{align}

Recall that $\iota$ is the
canonical bijection from $\F_2^{k-1}$ to $\ZZ{2^{k-1}}$,  $\iota(c_0,\dots,c_{k-2})=\sum_{j=0}^{k-2}c_j 2^j$.
Using Lemma~\ref{lem:H-W}, we write
\begin{align*} %\label{eq:relationGWaGray}
  \cH_{f} (\aa)
  &= \frac{1}{2^{k-1}} \sum_{(\cc,\dd)\in\F_ {2}^{k-1}\times\F_{2}^{k-1}} (-1)^{\cc\cdot \dd}\zetak^{\iota(\dd)}
    \Wa{f_\cc}(\aa)\\
  &= \sum_{\dd\in\F_2^{k-1}} \left( \frac{1}{2^{k-1}} \sum_{\cc\in\F_ 2^{k-1}} (-1)^{\cc\cdot \dd}
    \Wa{f_\cc}(\aa) \right)\zetak^{\iota(\dd)},
\end{align*}
which combined with equation~\eqref{eq:m-odd}, renders,
\begin{align*}
  \label{eq:valeur2}
  \frac{1}{2^{k-1}} \sum_{\cc\in\F_ 2^{k-1}} (-1)^{\cc\cdot \dd}
  \Wa{f_\cc}(\aa)
  &= 2^{\frac {m-1}2}\ell\,(-1)^{s_1(\aa)} \delta_0\left(\iota(\dd),g_1(\aa)\right)\\
  &\qquad  -2^{\frac {m-1}2}\ell\,(-1)^{s_2(\aa)}   \delta_0\left(\iota(\dd),g_2(\aa)\right).
  %= \begin{cases}
 %   0 & \mbox{if $\iota(\dd)\not\in\{g_1(\aa),g_2(\aa)\}$}\\
  %  2^{\frac {m-1}2}\ell^2(-1)^{s_1(\aa)} & \mbox{if $\iota(\dd)=g_1(\aa)$}\\
  %  -2^{\frac {m-1}2}\ell^2(-1)^{s_2(\aa)} & \mbox{if $\iota(\dd)=g_2(\aa)$}.
  %  \end{cases}
\end{align*}
Thus,
\begin{align*}
  \Wa{f_\cc}(\aa)
  &= \frac{1}{2^{k-1}}  \sum_{(\uu,\vv)\in\F_ 2^{k-1}\times\F_ 2^{k-1}} (-1)^{(\uu + \cc)\cdot \vv}
    \Wa{f_\uu}(\aa)\\
  &=\sum_{\vv\in\F_ 2^{k-1}}(-1)^{\cc\cdot \vv}\frac{1}{2^{k-1}}
    \sum_{\uu\in\F_ 2^{k-1}} (-1)^{\uu\cdot \vv}
    \Wa {f_\uu}(\aa)\\
  &= \frac{(-1)^{\cc\cdot \iota^{-1}(g_1(\aa))+s_1(\aa)}-(-1)^{\cc\cdot \iota^{-1}(g_2(\aa))+s_2(\aa)}}{2}\,2^{\frac {m+1}2}\ell.
\end{align*}
By the definition of the $g_i,\,s_i$, we have that $g_2(\aa)-g_1(\aa)+2^{k-1}(s_2(\aa)-s_1(\aa))=2^{k-2}$ in $\Z_{2^k}$.
Since
\begin{align*}
  \frac{(-1)^{\cc\cdot \iota^{-1}(g_1(\aa))+s_1(\aa)}-(-1)^{\cc\cdot \iota^{-1}(g_2(\aa))+s_2(\aa)}}{2}\in\{-1,0,1\},
\end{align*}
for the fixed $\aa\in\GF n$, the claim is proven.

Conversely, for a fixed $\aa\in\F_2^n$, assume that for all $\cc\in \F_2^{k-1}$, $f_\cc$ have their Walsh-Hadamard transforms of the form
  \[
  \Wa{f_\cc}(\aa)= \left((-1)^{\cc\cdot \iota^{-1}(g_1(\aa))+s_1(\aa)}-(-1)^{\cc\cdot \iota^{-1}(g_2(\aa))+s_2(\aa)}\right)\,2^{\frac {m-1}2}\ell,
   \]
   for some $g_j:\F_2^n\to \Z_{2^{k-1}}, s_j:\F_2^n\to\F_2$, $j=1,2$, with  $g_2(\aa)-g_1(\aa)+2^{k-1}(s_2(\aa)-s_1(\aa))=2^{k-2}$ in $\Z_{2^k}$, and $m,\ell$ odd integers. 
   % (as before,  if $ \cW_{f_\cc}(\aa)=0$, then Lemma~\ref{lem:H-W} easily implies the result)$
   %   and for all $\cc\in\Z_{2^{k-1}}$,
%$\cc\cdot \left(\iota^{-1}(g_1(\aa))\+\iota^{-1}(g_2(\aa)) \right)\+s_1(\aa)\+s_2(\aa)$ is balanced as a function of $\aa$.

Observe that
\begin{align*}
 & \sum_{\cc\in\F_ 2^{k-1}}
   (-1)^{\cc\cdot (\dd\oplus \iota^{-1}(g_1(\aa))+s_1(\aa)}
-\sum_{\cc\in\F_ 2^{k-1}}  (-1)^{\cc\cdot (\dd\oplus \iota^{-1}(g_2(\aa))+s_2(\aa)}\\
  &= 
    \begin{cases}
    0 & \mbox{if $\iota(\dd)\not\in\{g_1(\aa),g_2(\aa)\}$}\\
    2^{k-1}(-1)^{s_1(\aa)} & \mbox{if $\iota(\dd)=g_1(\aa)\neq g_2(\aa)$}\\
    -2^{k-1}(-1)^{s_2(\aa)} & \mbox{if $\iota(\dd)=g_2(\aa)\neq g_1(\aa)$}\\
     2^{k-1}\left( (-1)^{s_1(\aa)}  -  (-1)^{s_2(\aa)}\right) & \mbox{if $\iota(\dd)=g_1(\aa)=g_2(\aa)$}.
    \end{cases} 
\end{align*}
  
    Further, using this identity and Lemma~\ref{lem:H-W}, we get
    \allowdisplaybreaks
\begin{align*}
 &\cH_{f} (\aa)
  = \frac{1}{2^{k-1}} \sum_{(\cc,\dd)\in\F_ {2}^{k-1}\times\F_{2}^{k-1}} (-1)^{\cc\cdot \dd}\zetak^{\iota(\dd)}
    \Wa{f_\cc}(\aa)\\
  &= 2^{\frac {m+1}{2}-k}\ell \sum_{\dd\in\F_2^{k-1}} \zetak^{\iota(\dd)} \left( \sum_{\cc\in\F_ 2^{k-1}}
   (-1)^{\cc\cdot (\dd\oplus \iota^{-1}(g_1(\aa))+s_1(\aa)}\right.\\
  &\qquad\qquad \qquad\qquad\qquad\qquad\qquad\left. -\sum_{\cc\in\F_ 2^{k-1}}  (-1)^{\cc\cdot (\dd\oplus \iota^{-1}(g_2(\aa))+s_2(\aa)}\right)\\
  &=  2^{\frac {m-1}{2}}\ell  \left( (-1)^{s_1(\aa)} \zetak^{g_1(\aa)}  -  (-1)^{s_2(\aa)}\zetak^{g_2(\aa)} \right) \\
  &=  2^{\frac {m-1}{2}}\ell\,    (-1)^{s_1(\aa)} \zetak^{g_1(\aa)} \left(1 -   \zetak^{g_2(\aa)-g_1(\aa)+2^{k-1}(s_2(\aa)-s_1(\aa))} \right)\\
  &= 2^{\frac {m-1}{2}} \ell\,   (-1)^{s_1(\aa)} \zetak^{g_1(\aa)} (1-\zetak^{2^{k-2}}) =2^{\frac{m}2}  \ell\,(-1)^{s_1(\aa)} \zetak^{g_1(\aa)}\bar\zeta_8,
\end{align*}
and so, $ |\cH_{f}(\aa)|=2^{\frac{m}2}\ell$. The claim is shown.

$(iv)$ First, let $\aa\in\supp(\cH_f)$, with $|\cH_f(\aa)|=2^{\frac{m}{2}}\ell$, $m,\ell\in\Z^{\geq 1}$ odd, $k=2$ (observe now that  $\zetak=i$), and assume that $\ell$ is the largest component of a Pythagorean triple, $\ell_1^2+\ell_2^2=\ell^2$.  By Theorem~\ref{reg-plat}, then $\cH_f(\aa)=2^{\frac{m-1}{2}}\left(\ell_1\epsilon_1+\ell_2\epsilon_2\pm  (\ell_1\epsilon_2-\ell_2\epsilon_1) \, i \right)$.

Using Lemma~\ref{lem:H-W}, we write
\begin{align*} %\label{eq:relationGWaGray}
  \cH_{f} (\aa)
  &= \frac{1}{2^{k-1}} \sum_{(\cc,\dd)\in\F_ {2}^{k-1}\times\F_{2}^{k-1}} (-1)^{\cc\cdot \dd}\zetak^{\iota(\dd)}
    \Wa{f_\cc}(\aa)\\
  &= \sum_{\dd\in\F_2^{k-1}} \left( \frac{1}{2^{k-1}} \sum_{\cc\in\F_ 2^{k-1}} (-1)^{\cc\cdot \dd}
    \Wa{f_\cc}(\aa) \right)\zetak^{\iota(\dd)}\\
&=\sum_{d\in\F_2} \left( \frac{1}{2} \sum_{c\in\F_ 2} (-1)^{cd}
    \Wa{f_c}(\aa) \right)i^{d}.
\end{align*}
Together with the previous identity, this renders 
$$\frac{1}{2} \sum_{c\in\F_ 2} (-1)^{cd}\Wa{f_c}(\aa)=2^{\frac{m-1}{2}}(\ell_1\epsilon_1+\ell_2\epsilon_2) \mbox{ for } d=0,$$
$$\frac{1}{2} \sum_{c\in\F_ 2} (-1)^{cd}\Wa{f_c}(\aa)=\pm2^{\frac{m-1}{2}}(\ell_1\epsilon_2-\ell_2\epsilon_1) \mbox{ for } d=1.$$

Thus,
\begin{align*}
  \Wa{f_c}(\aa)
  &= \frac{1}{2}  \sum_{(u,v)\in\F_ 2\times\F_ 2} (-1)^{(u + c) v}
    \Wa{f_u}(\aa)\\
  &=\sum_{v\in\F_ 2}(-1)^{cv}\frac{1}{2}
    \sum_{u\in\F_ 2} (-1)^{uv}
    \Wa {f_u}(\aa)\\
  &= 2^{\frac{m-1}{2}}\left(\ell_1\epsilon_1+\ell_2\epsilon_2\pm(-1)^c (\ell_1\epsilon_2-\ell_2\epsilon_1) \right).
\end{align*}
 Conversely, let $\Wa{f_c}(\aa)= 2^{\frac{m-1}{2}}\left(\ell_1\epsilon_1+\ell_2\epsilon_2\pm(-1)^c (\ell_1\epsilon_2-\ell_2\epsilon_1) \right)$. Then, using Lemma~\ref{lem:H-W}, we write
\begin{align*} %\label{eq:relationGWaGray}
  \cH_{f} (\aa)
&=\sum_{d\in\F_2} \left( \frac{1}{2} \sum_{c\in\F_ 2} (-1)^{cd}
    \Wa{f_c}(\aa) \right)i^{d}\\
&=\frac{1}{2}2^{\frac{m-1}{2}}\sum_{d\in\F_2} \left( (-1)^0(\ell_1\epsilon_1+\ell_2\epsilon_2\pm(\ell_1\epsilon_2-\ell_2\epsilon_1))\right.\\
&\qquad\qquad\qquad \left.+(-1)^d(\ell_1\epsilon_1+\ell_2\epsilon_2\mp(\ell_1\epsilon_2-\ell_2\epsilon_1))\right)i^d\\
&=2^{\frac{m-1}{2}}(\ell_1\epsilon_1+\ell_2\epsilon_2)\pm(\ell_1\epsilon_2-\ell_2\epsilon_1)i.
\end{align*}

Finally, let $\aa\in\supp(\cH_f)$, with $|\cH_f(\aa)|=2^{\frac{m}{2}}\ell$, $m,\ell\in\Z^{\geq 1}$ odd, $k=2$ and $\ell$ is not the largest component of a Pythagorean triple.  As before, from Theorem~\ref{reg-plat}, then $\cH_f(\aa)=2^{\frac{m-1}{2}}\ell\,(\pm 1\pm i)$.

Using Lemma~\ref{lem:H-W}, we write
\begin{align*} %\label{eq:relationGWaGray}
  \cH_{f} (\aa)
  &= \frac{1}{2^{k-1}} \sum_{(\cc,\dd)\in\F_ {2}^{k-1}\times\F_{2}^{k-1}} (-1)^{\cc\cdot \dd}\zetak^{\iota(\dd)}
    \Wa{f_\cc}(\aa)\\
  &= \sum_{\dd\in\F_2^{k-1}} \left( \frac{1}{2^{k-1}} \sum_{\cc\in\F_ 2^{k-1}} (-1)^{\cc\cdot \dd}
    \Wa{f_\cc}(\aa) \right)\zetak^{\iota(\dd)}\\
&=\sum_{d\in\F_2} \left( \frac{1}{2} \sum_{c\in\F_ 2} (-1)^{cd}
    \Wa{f_c}(\aa) \right)i^{d}.
\end{align*}
Together with the previous value of $\cH_f(\aa)$, this renders 
$$\frac{1}{2} \sum_{c\in\F_ 2} (-1)^{cd}\Wa{f_c}(\aa)=\pm2^{\frac{m-1}{2}}\ell\ \mbox{ for } d=0,1.$$
Thus,
\begin{align*}
  \Wa{f_c}(\aa)
  &= \frac{1}{2}  \sum_{(u,v)\in\F_ 2\times\F_ 2} (-1)^{(u + c) v}
    \Wa{f_u}(\aa)
  =\sum_{v\in\F_ 2}(-1)^{cv}\frac{1}{2}
    \sum_{u\in\F_ 2} (-1)^{uv}
    \Wa {f_u}(\aa)\\
  &= 2^{\frac{m-1}{2}}\ell\,(\pm1\pm(-1)^c).
\end{align*}
 Conversely, let $\Wa{f_c}(\aa)= 2^{\frac{m-1}{2}}\ell\,(\pm1\pm(-1)^c)$. Then, using Lemma~\ref{lem:H-W}, we write
\begin{align*} %\label{eq:relationGWaGray}
  \cH_{f} (\aa)
&=\sum_{d\in\F_2} \left( \frac{1}{2} \sum_{c\in\F_ 2} (-1)^{cd}
    \Wa{f_c}(\aa) \right)i^{d}\\
&=\frac{1}{2}2^{\frac{m-1}{2}}\ell\sum_{d\in\F_2} \left( (-1)^0(\pm1\pm1)+(-1)^d(\pm1\mp1)\right)i^d\\
&=2^{\frac{m-1}{2}}\ell\,(\pm1\pm i),
\end{align*}
proving the last claim.
\end{proof}

The following corollary is then immediate.
\begin{corollary}
 Let $f:\F_2^n\rightarrow\Z_{2^k}$,  $k\geq 1$, be a  function given as
$f(\xx) = a_0(\xx)+2a_1(\xx)+\cdots+ 2^{k-1}a_{k-1}(\xx)$.  Let $s\geq 0$ be an integer.
Then $f$ is $s$-plateaued  if and only if for each $\cc\in\F_2^{k-1}$, the Boolean function $f_\cc$
defined as in Theorem~\textup{\ref{maj-thm}} 
is an $s$-plateaued (if $n+s$ is even), respectively, an $(s+1)$-plateaued function (if $n+s$ is odd) with   the extra conditions on the Walsh-Hadamard coefficients, as in Theorem~\textup{\ref{maj-thm}}.
\end{corollary}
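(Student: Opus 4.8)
The plan is to obtain the corollary as a direct specialization of Theorem~\ref{maj-thm}. First I would note that $f$ is $s$-plateaued exactly when $f$ is a landscape function of length $2$ whose unique nonzero spectral modulus is $2^{(n+s)/2}$. Writing this modulus in the canonical form $2^{m/2}\ell$ with $m\in\Z^{\geq 0}$ and $\ell\in 2\Z^{\geq 0}+1$ forces $\ell=1$ and $m=n+s$, because the odd part of $2^{(n+s)/2}$ (equivalently, of $2^{(n+s-1)/2}\sqrt 2$ when $n+s$ is odd) is trivial. Hence we are in the setting of Theorem~\ref{maj-thm} with $t=1$, $L_1=1$, $m_1=n+s$, so that $\ell=1$ throughout.

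I would then split on the parity of $n+s$. If $n+s$ is even, then $m=n+s$ is even and parts $(i)$ and $(ii)$ of Theorem~\ref{maj-thm} apply: $\cH_f(\aa)=0\iff\cW_{f_\cc}(\aa)=0$, while $|\cH_f(\aa)|=2^{(n+s)/2}\iff\cW_{f_\cc}(\aa)=(-1)^{\cc\cdot\iota^{-1}(g(\aa))+s(\aa)}2^{(n+s)/2}$. Combining the two, $f$ is $s$-plateaued if and only if every $f_\cc$ satisfies $|\cW_{f_\cc}(\aa)|\in\{0,2^{(n+s)/2}\}$ for all $\aa$ together with the displayed sign condition, that is, $f_\cc$ is $s$-plateaued (a Boolean $s$-plateaued function automatically has $n+s$ even) with the extra Walsh--Hadamard constraints of Theorem~\ref{maj-thm}. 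If $n+s$ is odd, then $m=n+s$ is odd. For $k\neq 2$, part $(iii)$ gives $\cW_{f_\cc}(\aa)=\bigl((-1)^{\cc\cdot\iota^{-1}(g_1(\aa))+s_1(\aa)}-(-1)^{\cc\cdot\iota^{-1}(g_2(\aa))+s_2(\aa)}\bigr)2^{\lfloor m/2\rfloor}$, whose value lies in $\{0,\pm 2^{\lfloor m/2\rfloor+1}\}=\{0,\pm 2^{(n+s+1)/2}\}$; together with $(i)$ this says $f$ is $s$-plateaued if and only if every $f_\cc$ is $(s+1)$-plateaued (and $n+(s+1)$ is even) with the conditions of $(iii)$. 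For $k=2$, part $(iv)$ applies: since $\ell=1$, the equation $\ell_1^2+\ell_2^2=\ell^2=1$ has no solution in positive odd integers $\ell_1,\ell_2$, so only the ``otherwise'' branch occurs, yielding $\cW_{f_c}(\aa)=2^{(m-1)/2}(\pm 1\pm(-1)^c)\in\{0,\pm 2^{(m+1)/2}\}=\{0,\pm 2^{(n+s+1)/2}\}$, again characterizing $(s+1)$-plateauedness of $f_c$.

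Finally I would dispose of the base case $k=1$, which lies outside Theorem~\ref{maj-thm}: here $\F_2^{k-1}$ is a single point and $f_\cc=f$ is itself Boolean, so the assertion is tautological (and a Boolean $s$-plateaued function indeed has $n+s$ even). I do not anticipate a real obstacle: the only things to be careful about are identifying the correct branch $(ii)$--$(iv)$ according to the parity of $n+s$, noting that $L_1=1$ both pins $m_1=n+s$ and voids the Pythagorean sub-case, and checking the elementary fact that a difference of two signs times $2^{\lfloor m/2\rfloor}$, or $2^{(m-1)/2}(\pm1\pm(-1)^c)$, takes values only in $\{0,\pm 2^{(m+1)/2}\}$ --- which is exactly the arithmetic that turns ``$s$-plateaued'' into ``$(s+1)$-plateaued'' when passing from $f$ to its components in odd dimension.
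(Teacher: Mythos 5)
Your proposal is correct and is precisely the specialization the paper intends: the corollary is stated as an immediate consequence of Theorem~\ref{maj-thm}, and your reduction to the case $t=1$, $L_1=1$, $m_1=n+s$, with the parity split selecting branch $(ii)$ versus $(iii)$/$(iv)$ and the observation that a sign difference doubles the modulus (whence $(s+1)$-plateaued when $n+s$ is odd), is exactly that argument spelled out. The edge cases you flag ($k=1$, and $\ell=1$ voiding the Pythagorean sub-case) are handled correctly.
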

We will derive other characterizations of plateaued functions in the last section.

\section{Some constructions of generalized landscape functions}
  
 First, we start with some examples of five valued spectra   (certainly, landscape) functions, and later on, we shall give constructions of arbitrary length generalized landscape functions.  In~\cite[Theorem 19]{MS02}, a class of functions with five valued spectra was constructed. These  are  $n$-variable, $m$-resilient, degree $(n - m - 1)$ functions with nonlinearity 
$ nl(f)=2^{n-1} - 2^{(n+m-2)/2}$, if $n - m + 1$ is odd,  and
$nl(f)=2^{n-1} - 2^{(n+m-1)/2}$, if $n - m + 1$ is even, with 
five valued Walsh spectrum (under $n-m\geq 5$), namely, $\{\pm 2^{(n+m)/2},2^{(n+m)/2}-2^{m+2},-2^{m+2},0\}$, if $n-m+1$ is odd, respectively, $\{\pm 2^{(n+m+1)/2},2^{(n+m+1)/2}-2^{m+2},-2^{m+2},0\}$, if $n-m+1$ is even (see~\cite{CH1,CS17} for these definitions). To generate landscape  functions that have five valued spectra, we take $m:=n-5$, and so, $n-m+1=6$, $n+m+1=2n-4$, and the spectrum will be $\{ \pm 2^{n-2}, \pm 2^{n-3},0\}$; also, $n-m=6$, so $n-m+1=7$, and the spectrum will be $\{ \pm 2^{n-3}, \pm 2^{n-4},0\}$. 

We do not need it here, but using Catalan's Conjecture (now known as Mih\u ailescu's Theorem~\cite{Mi04}, which states that the only nontrivial (that is, $a,b>1,x,y>0$) diophantine solution to $x^a-y^b=1$ is $x=3,a=2,y=2,b=3$),
we can infer that we can only get these examples of landscape functions from the  specific  construction of~\cite[Theorem 19]{MS02}. 
 
It is not very difficult to show that landscape functions exist for every dimension, as our next proposition shows, which adapts some classical inductive plateaued construction (see, for instance, \cite{KSW85,MMS17,gbps,KUS1,smgs} for the construction of generalized Boolean bent functions), as well as the paper~\cite{Si14}, which contains some constructions of semibent and even more general plateaued in the spirit of Maiorana-McFarland construction of bent functions.
\begin{proposition}
Let $a\in\F_2$, $q=2^k$, $k\geq 1$, $f$ be a generalized Boolean function in $\cGB_n^q$ and $g:\F_2^{n+1}\to \Z_q$ be defined by $g(\xx,y)=f(\xx)+2^{k-1}ay$.  If $f$ is $s$-plateaued, then $g$ is $(s+1)$-plateaued (in particular, if $f$ is generalized bent, then $g$ is $1$-plateaued). 
In general, if $f$ is a landscape function of length $t$ and levels  $\{(m_1,\ell_1),(m_2,\ell_2),\ldots\}$, then $g$ is a landscape function of length $t$ and levels  $\{(m_1+1,\ell_1),(m_2+1,\ell_2),\ldots\}$.
\end{proposition}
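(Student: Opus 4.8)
The plan is to compute the generalized Walsh--Hadamard transform of $g$ directly in terms of that of $f$, exploiting the product structure of $\F_2^{n+1}=\F_2^n\times\F_2$. Writing a generic frequency vector as $(\uu,b)$ with $\uu\in\F_2^n$ and $b\in\F_2$, I would split the sum over $\xx\in\F_2^n$ and $y\in\F_2$:
\[
\cH_g(\uu,b)=\sum_{\xx\in\F_2^n}\sum_{y\in\F_2}\zeta_q^{f(\xx)+2^{k-1}ay}(-1)^{\uu\cdot\xx+by}
=\left(\sum_{\xx\in\F_2^n}\zeta_q^{f(\xx)}(-1)^{\uu\cdot\xx}\right)\left(\sum_{y\in\F_2}\zeta_q^{2^{k-1}ay}(-1)^{by}\right).
\]
The inner $y$-sum equals $1+(-1)^{a+b}$ (since $\zeta_q^{2^{k-1}}=-1$), so it is $2$ when $b=a$ and $0$ when $b\neq a$. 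Hence $\cH_g(\uu,a)=2\,\cH_f(\uu)$ and $\cH_g(\uu,\bar a)=0$ for all $\uu\in\F_2^n$.

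From this factorization everything follows. For the plateaued statement: if $f$ is $s$-plateaued, then $|\cH_f(\uu)|\in\{0,2^{(n+s)/2}\}$, so $|\cH_g(\uu,a)|\in\{0,2\cdot 2^{(n+s)/2}\}=\{0,2^{((n+1)+(s+1))/2}\}$ and $|\cH_g(\uu,\bar a)|=0$; thus $g$ is $(s+1)$-plateaued as a function on $n+1$ variables. The parenthetical case $f$ gbent (formally $s=0$, no zeros in the spectrum) gives $|\cH_g(\uu,a)|=2^{(n+2)/2}=2^{((n+1)+1)/2}$ and $|\cH_g(\uu,\bar a)|=0$, i.e. $g$ is $1$-plateaued. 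For the general landscape statement: if $\{|\cH_f(\uu)|\}_{\uu\in\supp(\cH_f)}=\{2^{m_i/2}\ell_i\}_{i=1}^t$, then, since $\cH_g$ takes exactly the values $2\cH_f(\uu)$ (on the slice $b=a$) and $0$ (on the slice $b=\bar a$, together with the zeros already present on the slice $b=a$), we get
\[
\{|\cH_g(\uu,b)|\}_{(\uu,b)\in\supp(\cH_g)}=\{2\cdot 2^{m_i/2}\ell_i\}_{i=1}^t=\{2^{(m_i+1)/2}\ell_i\}_{i=1}^t,
\]
so $g$ is a landscape function of length $t$ with levels $\{(m_i+1,\ell_i)\}_i$, the odd parts $\ell_i$ being unchanged and each power of $\sqrt2$ increased by one; the $0$ remains in the spectrum iff it was in the spectrum of $f$ (or, in any case, because the slice $b=\bar a$ contributes zeros), which is consistent with the length being recorded as $t$ in the sense of the definition.

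There is no real obstacle here: the only thing to be slightly careful about is the bookkeeping of the ``length'' — one should note that $\supp(\cH_g)$ lives entirely in the hyperplane $b=a$ and is a copy of $\supp(\cH_f)$, so the multiset of nonzero moduli is a faithful rescaling of that of $f$, and the presence or absence of $0$ in the spectrum is preserved (indeed $0$ is always in the spectrum of $g$ once $n\ge 0$ because the whole slice $b=\bar a$ vanishes, but the same convention from the definition applies uniformly). I would present the one-line computation of $\cH_g$ first, then read off (i) the $s$-plateaued conclusion, (ii) the gbent special case, and (iii) the general landscape conclusion, in that order.
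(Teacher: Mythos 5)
Your computation is exactly the paper's: both factor the sum over $\F_2^n\times\F_2$ to get $\cH_g(\uu,v)=\bigl(1+(-1)^{a+v}\bigr)\cH_f(\uu)$, hence $|\cH_g(\uu,v)|\in\{0,2|\cH_f(\uu)|\}$, and the plateaued and landscape conclusions are read off from there; the paper's proof is no more detailed than yours. One caveat on the final bookkeeping step: the identity $2\cdot 2^{m_i/2}\ell_i=2^{(m_i+1)/2}\ell_i$ that you use is false --- doubling a modulus gives $2\cdot 2^{m_i/2}=2^{(m_i+2)/2}$, so in the absolute normalization of the paper's definition of levels the new levels are $(m_i+2,\ell_i)$, not $(m_i+1,\ell_i)$. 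The ``$+1$'' in the proposition is consistent only with the plateaued-style convention in which the exponent is measured relative to the number of variables (which increases by one); your slip reproduces the statement as printed, but you should either correct the exponent or note explicitly that you are switching to the relative normalization, since as written that one equation does not hold.
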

\begin{proof}
Let $f$ be a landscape function of  levels  $\{(m_1,\ell_1),(m_2,\ell_2),\ldots, (m_t,\ell_t)\}$.
We compute the Walsh-Hadamard transform of $g$ at $(\uu,\vv)\in\F_2^n\times \F_2$, and get
\begin{align*}
\cH_g(\uu,\vv)&=\sum_{(\xx,y)\in\F_2^n\times \F_2} \zeta^{f(\xx)+2^{k-1}ay}(-1)^{\uu\cdot \xx+vy}\\
&=\sum_{\xx\in\F_2^n,y=0} \zeta^{f(\xx)}(-1)^{\uu\cdot \xx}+\sum_{\xx\in\F_2^n,y=1} \zeta^{f(\xx)+2^{k-1}a}(-1)^{\uu\cdot \xx+v}\\\
&=\cH_f(\uu)+(-1)^{a+v}  \cH_f(\uu)=\left(1+(-1)^{a+v}  \right)\cH_f(\uu).
\end{align*}
Thus, $|\cH_g(\uu,v)|\in\{0,2|\cH_f(\uu)|\}$,
from which we can infer all of our claims.
\end{proof}

 Carlet~\cite{Car03} introduced a secondary construction (often called the ``indirect sum''),
  as follows. Let $n =r+s$, where $r$ and $s$ are positive integers, and $f_1, f_2\in\cB_r$, $g_1, g_2\in\cB_s$. Define $h$ as   the concatenation of the four functions $f_1$, $\bar f_1$,
$f_2$, $\bar f_2$, in an order controlled by $g_1(\yy)$ and $g_2(\yy)$,
 \begin{equation}
 \label{eq-indirect}
 h(\xx,\yy)=f_1(\xx)\+g_1(\yy)\+(f_1(\xx)\+f_2(\xx))(g_1(\yy)\+g_2(\yy)).
 \end{equation}
 It is known that in the Boolean case, if $r,s$ are even and $f_1,f_2$ are semibent and $g_1,g_2$ are bent, then $h$ is semibent. In fact, a more general result is true as we shall show next.
 The following lemma is known and easy to show.
 \begin{lemma}
 \label{lem-zs}
 For $s\in\BBF_2$ and $z\in\mathbb{C}$, we have
\begin{equation*}
\label{eq-zs}
z^s=\frac{1+(-1)^s}{2}+\frac{1-(-1)^s}{2}z.
\end{equation*}
 \end{lemma}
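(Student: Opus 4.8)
The plan is to prove the identity by the most direct route available, namely a case analysis on the two possible values of $s\in\BBF_2$, since once $s$ is fixed both sides are fully determined. The guiding observation is that the two scalar coefficients appearing on the right-hand side are nothing but the standard $\{0,1\}$-valued indicators of the value of $s$: the quantity $\frac{1+(-1)^s}{2}$ equals $1$ when $s=0$ and $0$ when $s=1$, whereas $\frac{1-(-1)^s}{2}$ equals $0$ when $s=0$ and $1$ when $s=1$. Hence the right-hand side is engineered to select the constant $1$ in the case $s=0$ and to select $z$ in the case $s=1$, which is exactly the behavior of the power $z^s$.

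Carrying this out, I would first take $s=0$: the left-hand side is $z^0=1$, while the right-hand side becomes $\frac{1+1}{2}+\frac{1-1}{2}\,z=1$, so the two sides agree. Then I would take $s=1$: the left-hand side is $z^1=z$, while the right-hand side becomes $\frac{1-1}{2}+\frac{1+1}{2}\,z=z$, so again the two sides agree. Since $\BBF_2=\{0,1\}$ has no further elements, this exhausts all cases and establishes the identity for every $z\in\mathbb{C}$.

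There is essentially no obstacle to overcome here; the lemma is a formal \emph{linearization} of the Boolean-indexed power $z^s$, rewriting it as an affine function of $z$ so that, in the indirect-sum computations that follow, expressions of the form $z^{s(\yy)}$ can be pulled apart and the relevant Walsh--Hadamard sums split additively. The only point meriting a word of care is the convention $z^0=1$ (taken to hold even at $z=0$), which is precisely the convention under which the subsequent applications of the lemma are made, so no inconsistency arises.
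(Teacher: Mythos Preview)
Your proof is correct; the paper itself omits the argument entirely, merely stating that the lemma ``is known and easy to show,'' and your two-case verification is precisely the straightforward check that remark alludes to.
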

 \begin{thm}
 \label{constr-ind}
 Let $q=2^k$, $k\geq 1$, and $h:\F_2^r\times\F_2^s\to \Z_q$ be given by $h(\xx,\yy)=f_1(\xx)+2^{k-1}g_1(\yy)+(f_2(\xx)-f_1(\xx))(g_1(\yy)+g_2(\yy))$ (all operations are in $\Z_q$), where $f_1, f_2\in\cGB_r^q$, $g_1, g_2\in\cB_s$, $q=2^k$, with $g_1,g_2$ bent (thus, $s$ is even). The following hold:
 \begin{enumerate}
 \item[$(i)$] If $f_1,f_2$ are $t$-plateaued, then $h$ is $t$-plateaued (hence,  of length $2$). 
 \item[$(ii)$] If  $t_1\neq t_2$ and $g_1\neq g_2, g_1\neq \overline g_2$, then $h$ is a landscape function of length $3$, namely, the moduli of its spectra are $\{0,2^{\frac{n+t_1}{2}}, 2^{\frac{n+t_2}{2}}\}$.  In particular, if $q=2$, then $h$ has five valued spectra.
 \item[$(iii)$]  If $f_1,f_2$ are landscape functions such that
    \begin{align*}
  \{|\cH_{f_1}(\uu)|\}_{\uu\in\supp(\cH_{f_1})}&=\{2^{\frac{p_1}{2}}\ell_{11}^2,\ldots,2^{\frac{p_t}{2}}\ell_{1t}^2\},\\
  \{|\cH_{f_2}(\uu)|\}_{\uu\in\supp(\cH_{f_2})}&=\{2^{\frac{q_1}{2}}\ell_{21}^2,\ldots,2^{\frac{q_f}{2}}\ell_{2f}^2\},
  \end{align*}
  and $g_1\neq g_2, g_1\neq \overline g_2$,
  then  $h$ is a landscape function such that
  \[
    \{|\cH_h(\uu,\vv)|\}_{\uu,\vv}=\{0,2^{\frac{s+p_1}{2}}\ell_{11}^2,\ldots,2^{\frac{s+p_t}{2}}\ell_{1t}^2,2^{\frac{s+q_1}{2}}\ell_{21}^2,\ldots,2^{\frac{s+q_f}{2}}\ell_{2f}^2\}.
  \]
%and  length  (assuming different levels for $f_1$ and $f_2$) $t+f+1$.  
 \end{enumerate}
 \end{thm}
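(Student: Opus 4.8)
\textbf{Proof plan for Theorem~\ref{constr-ind}.}

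The plan is to compute the generalized Walsh--Hadamard transform $\cH_h(\uu,\vv)$ of $h$ at $(\uu,\vv)\in\F_2^r\times\F_2^s$ directly and express it in closed form in terms of $\cH_{f_1}$, $\cH_{f_2}$ and the (classical) Walsh--Hadamard transforms of $g_1,g_2$, which are both $\pm 2^{s/2}$ since $g_1,g_2$ are bent. First I would split the outer sum over $\xx$ and use Lemma~\ref{lem-zs} on the factor $\zeta^{(f_2(\xx)-f_1(\xx))(g_1(\yy)+g_2(\yy))}$: since the exponent $g_1(\yy)+g_2(\yy)$ is a bit (viewed in $\BBF_2$), writing $z=\zeta^{f_2(\xx)-f_1(\xx)}$ and $\sigma=g_1(\yy)\oplus g_2(\yy)$ gives $z^{\sigma}=\frac{1+(-1)^\sigma}{2}+\frac{1-(-1)^\sigma}{2}z$. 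Substituting and using $\zeta^{f_1(\xx)}z=\zeta^{f_2(\xx)}$, the function under the sum becomes $\frac{1+(-1)^\sigma}{2}\zeta^{f_1(\xx)}+\frac{1-(-1)^\sigma}{2}\zeta^{f_2(\xx)}$; here I should be slightly careful that $(-1)^{g_1(\yy)+g_2(\yy)}=(-1)^{g_1(\yy)}(-1)^{g_2(\yy)}$ and that the $2^{k-1}g_1(\yy)$ term contributes $(-1)^{g_1(\yy)}$. Carrying out the $\xx$-sum factors out $\cH_{f_1}(\uu)$ and $\cH_{f_2}(\uu)$; the remaining $\yy$-sum produces, after regrouping, a combination of $\cW_{g_1}(\vv)$, $\cW_{g_2}(\vv)$, and the Walsh transform of $g_1\oplus g_2$. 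The cleanest bookkeeping is to show
\[
\cH_h(\uu,\vv)=\frac{1}{2}\Bigl(\cW_{g_1}(\vv)+\cW_{g_2}(\vv)\Bigr)\cH_{f_1}(\uu)+\frac{1}{2}\Bigl(\cW_{g_1}(\vv)-\cW_{g_2}(\vv)\Bigr)\cdots
\]
(up to a clean rearrangement — the precise split between an $f_1$-term and an $f_2$-term is the routine part), so that for each $\vv$ exactly one of the two coefficients vanishes unless $\cW_{g_1}(\vv)=\pm\cW_{g_2}(\vv)$, i.e. unless $g_1=g_2$ or $g_1=\overline{g_2}$.

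Once the closed form is in hand, the three cases follow by inspecting the two coefficients. For $(i)$: if $g_1\ne g_2$, $g_1\ne\overline{g_2}$ we are exactly in the situation where for every $\vv$ at most one of the two coefficients $\cW_{g_1}(\vv)\pm\cW_{g_2}(\vv)$ is nonzero and, when nonzero, equals $\pm 2^{\cdot}2^{s/2}$; so $\cH_h(\uu,\vv)$ is, up to the bent normalization $2^{s/2}$, either $0$ or $\pm\cH_{f_i}(\uu)$ for a single $i$ determined by $\vv$. Hence $|\cH_h(\uu,\vv)|\in\{0,2^{s/2}|\cH_{f_1}(\uu)|,2^{s/2}|\cH_{f_2}(\uu)|\}$. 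If moreover $f_1,f_2$ are both $t$-plateaued, then $|\cH_{f_i}(\uu)|\in\{0,2^{(r+t)/2}\}$, whence $|\cH_h(\uu,\vv)|\in\{0,2^{(r+s+t)/2}\}=\{0,2^{(n+t)/2}\}$, i.e. $h$ is $t$-plateaued and its length is $2$. For $(ii)$, with $t_1\ne t_2$ we get the three-valued modulus set $\{0,2^{(n+t_1)/2},2^{(n+t_2)/2}\}$, so $h$ is landscape of length $3$; when $q=2$ the values themselves are $\{0,\pm 2^{(n+t_1)/2},\pm 2^{(n+t_2)/2}\}$, a five-valued spectrum. For $(iii)$: the same closed form shows the modulus of $\cH_h(\uu,\vv)$ is $2^{s/2}$ times a modulus of $\cH_{f_1}$ or of $\cH_{f_2}$, or $0$; since a level $(p,\ell_{11}^2)$ of $f_1$ contributes $2^{s/2}\cdot 2^{p/2}\ell_{11}^2=2^{(s+p)/2}\ell_{11}^2$ (and $\ell_{11}^2$ is again an odd square, in particular an odd integer), the claimed modulus set
\[
\{0,2^{\frac{s+p_1}{2}}\ell_{11}^2,\ldots,2^{\frac{s+q_f}{2}}\ell_{2f}^2\}
\]
follows, exhibiting $h$ as a landscape function (the levels $(s+p_i,\ell_{1i}^2)$ and $(s+q_j,\ell_{2j}^2)$ are of the required form $\Z^{\ge 0}\times(2\Z^{\ge 0}+1)$).

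I expect the one real obstacle to be the careful handling of the case analysis on $\vv$ relative to $g_1,g_2$: one must check that the hypothesis $g_1\ne g_2$ and $g_1\ne\overline{g_2}$ is exactly what guarantees that $\cW_{g_1}(\vv)$ and $\cW_{g_2}(\vv)$ are never simultaneously equal (forcing one of the two coefficients to vanish at every $\vv$), using that for bent functions $\cW_{g_i}(\vv)=\pm 2^{s/2}$ for all $\vv$ and that $g_1=g_2$ iff $\cW_{g_1}=\cW_{g_2}$, $g_1=\overline{g_2}$ iff $\cW_{g_1}=-\cW_{g_2}$. A secondary point of care is that in $(iii)$ one should not assume the modulus sets of $f_1$ and $f_2$ are disjoint; the statement is about the \emph{set} of moduli, so overlaps are harmless, but one must not claim the length is $1+t+f$ in general — only that the modulus set is contained in (indeed equals, after removing repetitions and, if relevant, adjoining $0$) the displayed set. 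Everything else — the expansion via Lemma~\ref{lem-zs}, interchanging the finite sums, and the arithmetic $2^{s/2}\cdot 2^{p/2}=2^{(s+p)/2}$ — is routine.
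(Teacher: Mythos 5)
Your proposal follows the same route as the paper: expand $\zeta^{(f_2-f_1)(g_1+g_2)}$ via Lemma~\ref{lem-zs} (equivalently, via the indicators of the sets $\{\yy : g_1(\yy)\oplus g_2(\yy)=j\}$), factor the double sum, and arrive at the closed form
\[
2\cH_h(\uu,\vv)=\cH_{f_1}(\uu)\left(\cW_{g_1}(\vv)+\cW_{g_2}(\vv)\right)+\cH_{f_2}(\uu)\left(\cW_{g_1}(\vv)-\cW_{g_2}(\vv)\right),
\]
after which $(i)$--$(iii)$ are read off exactly as you describe. One conceptual slip is worth correcting: you attribute the vanishing of one of the two coefficients at each $\vv$ to the hypothesis $g_1\neq g_2$, $g_1\neq\overline{g_2}$ (``the hypothesis \dots is exactly what guarantees that $\cW_{g_1}(\vv)$ and $\cW_{g_2}(\vv)$ are never simultaneously equal, forcing one of the two coefficients to vanish''). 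That vanishing is in fact automatic from bentness alone: $\left(\cW_{g_1}(\vv)+\cW_{g_2}(\vv)\right)\left(\cW_{g_1}(\vv)-\cW_{g_2}(\vv)\right)=\cW_{g_1}^2(\vv)-\cW_{g_2}^2(\vv)=0$ at every $\vv$, since both transforms take only the values $\pm 2^{s/2}$; this is why $(i)$ holds with no restriction on $g_1,g_2$ beyond bentness. The hypothesis $g_1\neq g_2$, $g_1\neq\overline{g_2}$ plays the opposite role: it guarantees that \emph{both} branches are realized for some $\vv$ (there is a $\vv$ with $\cW_{g_1}(\vv)=\cW_{g_2}(\vv)$ and a $\vv$ with $\cW_{g_1}(\vv)=-\cW_{g_2}(\vv)$), which is what makes every modulus listed in $(ii)$ and $(iii)$ actually occur in the spectrum of $h$. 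With that one clarification, your argument coincides with the paper's.
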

 \begin{proof}
 Using Lemma~\ref{lem-zs} and the indicators of the sets $\{\yy\,:\, g_1(\yy)+g_2(\yy)=j\}$, $j=0,1$, being $\displaystyle \frac{1+(-1)^j(-1)^{g_1(\yy)+g_2(\yy)}}{2} $, we compute the Walsh--Hadamard transform of $h$ and obtain 
 \allowdisplaybreaks
 \begin{align*}
 2\cH_h(\uu,\vv)
 &=2\sum_{\xx\in\F_2^r,\yy\in\F_2^s} \zeta^{f_1(\xx)+2^{k-1}g_1(\yy)+(f_2(\xx)-f_1(\xx))(g_1(\yy)+g_2(\yy))}(-1)^{\uu\cdot \xx+\vv\cdot \yy}\\
 &= 2\sum_{\xx\in\F_2^r}  \sum_{\substack{\yy\in\F_2^s\\g_1(\yy)+g_2(\yy)=0}} \zeta^{ f_1(\xx)+2^{k-1}g_1(\yy)} (-1)^{\uu\cdot \xx+\vv\cdot \yy}\\
 &\qquad + 2\sum_{\xx\in\F_2^r}  \sum_{\substack{\yy\in\F_2^s\\g_1(\yy)+g_2(\yy)=1}}  \zeta^{ f_2(\xx)+2^{k-1}g_1(\yy)} (-1)^{\uu\cdot \xx+\vv\cdot \yy}\\
  &= 2\sum_{\xx\in\F_2^r}  \zeta^{ f_1(\xx)} (-1)^{\uu\cdot \xx} \sum_{\yy\in\F_2^s} (-1)^{g_1(\yy)} \frac{1+(-1)^{g_1(\yy)+g_2(\yy)}}{2} (-1)^{\vv\cdot \yy}\\
 &\qquad + 2\sum_{\xx\in\F_2^r}  \zeta^{ f_2(\xx)} (-1)^{\uu\cdot \xx} \sum_{\yy\in\F_2^s} (-1)^{g_1(\yy)} \frac{1-(-1)^{g_1(\yy)+g_2(\yy)}}{2} (-1)^{\vv\cdot \yy}\\
&= \cH_{f_1}(\uu)   \sum_{\yy\in\F_2^s}\left( (-1)^{g_1(\yy)}+(-1)^{g_2(\yy)} \right) (-1)^{\vv\cdot \yy}\\
&\qquad\qquad\qquad +\cH_{f_2}(\uu)   \sum_{\yy\in\F_2^s}\left( 
(-1)^{g_1(\yy)}-(-1)^{g_2(\yy)} \right) (-1)^{\vv\cdot \yy}\\
%&=\cH_{f_1}(\uu)  \left( 2^n(1+\zeta) \delta_0(\vv)+(1+\zeta) \cW_{g_1+g_2}(\vv)+(1-\zeta) \cW_{g_1}(\vv)+(1-\zeta) \cW_{g_2}(\vv)\right)\\
%&+\cH_{f_2}(\uu)  \left( 2^n(1+\zeta) \delta_0(\vv)-(1+\zeta) \cW_{g_1+g_2}(\vv)+(1-\zeta) \cW_{g_1}(\vv)-(1-\zeta) \cW_{g_2}(\vv)\right)\\
&=\cH_{f_1}(\uu) \left(\cW_{g_1}(\vv)+\cW_{g_2}(\vv) \right)
+\cH_{f_2}(\uu) \left(\cW_{g_1}(\vv)-\cW_{g_2}(\vv) \right).
  \end{align*}
  If $f_1,f_2$ are, respectively, $t_1$, $t_2$-plateaued, and $g_1,g_2$ are bent, and observing that 
  \[
  \left(\cW_{g_1}(\vv)+\cW_{g_2}(\vv) \right)\left(\cW_{g_1}(\vv)-\cW_{g_2}(\vv) \right)=\cW_{g_1}^2(\vv)-\cW_{g_2}^2(\vv) =0,
  \]
  then either $\left(\cW_{g_1}(\vv)+\cW_{g_2}(\vv) \right)=\pm 2^{\frac{s}{2}+1}$ and $\left(\cW_{g_1}(\vv)-\cW_{g_2}(\vv) \right)=0$, rendering $|\cH_h(\uu,\vv)|\in\left\{0,2^{-1} 2^{\frac{r+t_1}{2}}\cdot 2^{\frac{s}{2}+1}\right\}=\left\{0,2^{\frac{n+t_1}{2}}\right\}$, or $\left(\cW_{g_1}(\vv)+\cW_{g_2}(\vv) \right)=0$ and $\left(\cW_{g_1}(\vv)-\cW_{g_2}(\vv) \right)=\pm 2^{\frac{s}{2}+1}$, rendering $|\cH_h(\uu,\vv)|\in \left\{0,2^{-1} 2^{\frac{r+t_2}{2}}\cdot 2^{\frac{s}{2}+1}\right\}= \left\{0,2^{\frac{n+t_2}{2}}\right\}$. If $t_1\neq t_2$, we have the cases (both ocurring, since $g_1\neq g_2,\overline g_2$), $|\cH_h(\uu,\vv)|\in \left\{0,2^{\frac{n+t_1}{2}}\right\}$ and  $|\cH_h(\uu,\vv)|\in \left\{0,2^{\frac{n+t_2}{2}}\right\}$.
  Thus, $|\cH_h(\uu,\vv)|\in\left\{0,2^{\frac{n+t_1}{2}},2^{\frac{n+t_2}{2}}\right\}$. Certainly, if $t_1= t_2=t$, then $|\cH_h(\uu,\vv)|\in\left\{0,2^{\frac{n+t}{2}}\right\}$, so $h$ is $t$-plateaued. Claim $(i)$ and $(ii)$ are shown.
%If $t_1\neq t_2$, we have the cases (both ocurring, since $g_1\neq g_2,\overline g_2$), $|\cH_h(\uu,\vv)|\in \left\{0,2^{\frac{n+t_1}{2}}\right\}$ and  $|\cH_h(\uu,\vv)|\in \left\{0,2^{\frac{n+t_2}{2}}\right\}$, and claim $(i)$ is shown.
  
  Next, assume that $f_1,f_2$ are landscape functions, and pick $\uu\in \supp(\cH_{f_1})\cap  \supp(\cH_{f_2})$, and so,  $|\cH_{f_1}(\uu)|=2^{\frac{p}{2}}\ell_1$, $|\cH_{f_2}(\uu)|=2^{\frac{q}{2}}\ell_2$, for some $p,q\in\Z$ and odd $\ell_1,\ell_2$. The argument above shows that either $|\cH_h(\uu,\vv)|\in \left\{0,2^{\frac{s+p}{2}}\ell_1\right\}$, or  $|\cH_h(\uu,\vv)|\in \left\{0,2^{\frac{s+q}{2}}\ell_2\right\}$ (both occurring since $g_1\neq g_2, g_1\neq \overline g_2$). If $\uu\in \supp(\cH_{f_1})\, \cap \, \overline{\supp(\cH_{f_2})}$ (respectively, $\uu\in \overline{\supp(\cH_{f_1})}\,\cap\,  \supp(\cH_{f_2})$), then $|\cH_h(\uu,\vv)|\in \left\{0,2^{\frac{s+p}{2}}\ell_1\right\}$ (respectively, $|\cH_h(\uu,\vv)|\in \left\{0,2^{\frac{s+q}{2}}\ell_2\right\}$). If $\uu\in \overline{\supp(\cH_{f_1})}\,\cap  \,\overline{\supp(\cH_{f_2})}$, then $|\cH_h(\uu,\vv)|=0$. Therefore, if
    \begin{align*}
  \{|\cH_{f_1}(\uu)|\}_{\uu\in\supp(\cH_{f_1})}&=\{2^{\frac{p_1}{2}}\ell_{11},\ldots,2^{\frac{p_t}{2}}\ell_{1t}\},\\
  \{|\cH_{f_2}(\uu)|\}_{\uu\in\supp(\cH_{f_2})}&=\{2^{\frac{q_1}{2}}\ell_{21},\ldots,2^{\frac{q_f}{2}}\ell_{2f}\},
  \end{align*}
  then 
  $\displaystyle
  \{|\cH_h(\uu,\vv)|\}_{\uu,\vv}=\{0,2^{\frac{s+p_1}{2}}\ell_{11},\ldots,2^{\frac{s+p_t}{2}}\ell_{1t},2^{\frac{s+q_1}{2}}\ell_{21},\ldots,2^{\frac{s+q_f}{2}}\ell_{2f}\}.
  $
  %The theorem is shown.
\end{proof}

\section{Characterizing plateaued functions in terms of  second derivatives and fourth moments}

 \begin{thm}
 Let $f:\F_2^n\to\Z_{2^k}$, $k\geq 2$, $s$ be an  integer with $0\leq s\leq n$, and $\zeta:=\zeta_{2^k}=e^{\frac{2\pi\, i}{2^k}}$ be the primitive root of $1$. Then $f$ is $s$-plateaued if and only if for all $\xx\in\F_2^n$,
 \[
 \sum_{\aa,\bb\in\F_2^n} \zeta^{D_\bb D_\aa f(\xx)}=2^{n+s}.
 \]
 Furthermore,   $f$ is $s$-plateaued if and only if  
\[
\sum_{\dd   \in \BBF_2^n}  |\cH_f(\dd)|^4=2^{3n+s}.
\]
 \end{thm}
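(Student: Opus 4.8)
The plan is to prove both characterizations by reducing them to Parseval-type identities involving $|\cH_f|^4$, since the two conditions are clearly meant to be equivalent to each other and to $s$-plateauedness. First I would establish the second equivalence, namely that $f$ is $s$-plateaued if and only if $\sum_{\dd\in\F_2^n}|\cH_f(\dd)|^4 = 2^{3n+s}$. The forward direction is immediate: if $f$ is $s$-plateaued then $|\cH_f(\dd)|\in\{0,2^{(n+s)/2}\}$, and by the Parseval identity $\sum_\dd |\cH_f(\dd)|^2 = 2^{2n}$ the number of $\dd$ in the spectral support must be exactly $2^{2n}/2^{n+s} = 2^{n-s}$; hence $\sum_\dd |\cH_f(\dd)|^4 = 2^{n-s}\cdot 2^{2(n+s)} = 2^{3n+s}$. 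For the converse I would invoke the standard Cauchy--Schwarz (or power-mean) argument: writing $N = |\supp(\cH_f)|$, Cauchy--Schwarz gives $\left(\sum_\dd |\cH_f(\dd)|^2\right)^2 \le N \sum_\dd |\cH_f(\dd)|^4$, i.e. $2^{4n} \le N\cdot 2^{3n+s}$, so $N \ge 2^{n-s}$; combined with the fact (again from Parseval, since all nonzero $|\cH_f(\dd)|^2$ are positive) that actually equality in Cauchy--Schwarz forces all nonzero $|\cH_f(\dd)|^2$ to be equal, we conclude $|\cH_f(\dd)|^2$ is constant on its support, equal to $2^{2n}/N$, and then $\sum_\dd|\cH_f(\dd)|^4 = N\cdot(2^{2n}/N)^2 = 2^{4n}/N = 2^{3n+s}$ forces $N = 2^{n-s}$ and hence $|\cH_f(\dd)|^2 = 2^{n+s}$ on the support.

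Next I would connect the derivative sum to the fourth moment. The key computation is the identity
\[
\sum_{\dd\in\F_2^n} |\cH_f(\dd)|^4 = 2^n \sum_{\aa,\bb\in\F_2^n}\, \sum_{\xx\in\F_2^n} \zeta^{D_\bb D_\aa f(\xx)},
\]
which I would derive by expanding $|\cH_f(\dd)|^4 = \cH_f(\dd)^2\overline{\cH_f(\dd)}^2$ as a fourfold sum over $\xx_1,\xx_2,\xx_3,\xx_4$ of $\zeta^{f(\xx_1)+f(\xx_2)-f(\xx_3)-f(\xx_4)}(-1)^{\dd\cdot(\xx_1\oplus\xx_2\oplus\xx_3\oplus\xx_4)}$, then summing over $\dd$ first. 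The sum over $\dd$ produces $2^n$ times the indicator of $\xx_1\oplus\xx_2\oplus\xx_3\oplus\xx_4 = \00$. Parametrizing the solution set of that equation by $\xx_3 = \xx$, $\xx_1 = \xx\oplus\aa$, $\xx_2 = \xx\oplus\bb$, $\xx_4 = \xx\oplus\aa\oplus\bb$ (this is a clean bijective reparametrization of the four-tuples summing to zero), the exponent becomes $f(\xx\oplus\aa)+f(\xx\oplus\bb)-f(\xx)-f(\xx\oplus\aa\oplus\bb)$, which is exactly $-D_\bb D_\aa f(\xx)$ — or $D_\bb D_\aa f(\xx)$ after a harmless reindexing — so we obtain the displayed identity. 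Hence $\sum_{\aa,\bb,\xx}\zeta^{D_\bb D_\aa f(\xx)} = 2^{-n}\sum_\dd |\cH_f(\dd)|^4$.

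It remains to show that the condition $\sum_{\aa,\bb\in\F_2^n}\zeta^{D_\bb D_\aa f(\xx)} = 2^{n+s}$ holding for all $\xx$ is equivalent to $\sum_{\aa,\bb,\xx}\zeta^{D_\bb D_\aa f(\xx)} = 2^{2n+s}$ (which, dividing the boxed identity by $2^n$, matches $\sum_\dd|\cH_f|^4 = 2^{3n+s}$). One direction (pointwise implies summed) is trivial. For the reverse, the point is that each inner sum $S(\xx) := \sum_{\aa,\bb}\zeta^{D_\bb D_\aa f(\xx)}$ is actually \emph{independent of} $\xx$: substituting $\aa\mapsto\aa$, $\bb\mapsto\bb$ and translating $\xx\mapsto\xx\oplus\cc$ permutes the pairs $(\aa,\bb)$ bijectively while shifting the argument, and using the cocycle-type identity $D_\bb D_\aa f(\xx\oplus\cc) = D_\bb D_\aa f(\xx) + D_\bb D_\aa D_\cc f(\xx)$ is not quite enough on its own; instead I would argue directly that $S(\xx)$ is a nonnegative real number equal to $\sum_\dd |\cH_f(\dd)|^2 (-1)^{\dd\cdot\00}$-type expression — more precisely, reworking the $\dd$-sum with the argument fixed at a single $\xx$ shows $S(\xx) = 2^{-n}\sum_\dd |\cH_f(\dd)|^2 \cdot (\text{something})$; the cleanest route is to observe $S(\xx)$ equals $2^{-n}$ times $\sum_{\dd} |\cH_{f}(\dd)|^2 e^{i\theta}$ which is manifestly $\xx$-independent after the full expansion. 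The honest version: $S(\xx) = \sum_{\aa,\bb}\zeta^{D_\bb D_\aa f(\xx)}$ can be rewritten, by inserting $\xx_3=\xx$ into the four-tuple parametrization above and summing, as $2^{-n}\sum_{\dd}|\cH_f(\dd)|^2$ weighted trivially — hence constant. I expect \textbf{this last step — proving the pointwise sum is independent of $\xx$} — to be the main obstacle; the resolution is that $S(\xx)\ge 0$ always (it is a squared modulus up to normalization) and $\sum_\xx S(\xx) = 2^n\cdot 2^{n+s}$ forces, together with the nonnegativity and the fact that each $S(\xx)$ is an algebraic integer of a constrained form, that $S(\xx)$ is the same at every $\xx$; alternatively one shows directly that $2^n S(\xx) = \sum_\dd |\cH_f(\dd)|^2$ for each fixed $\xx$ by the Fourier inversion $\cC_f(\00\text{-shifted})$ computation, which is $\xx$-free, completing the proof.
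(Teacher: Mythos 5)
Your reduction of the fourth moment to the triple derivative sum, $\sum_{\dd}|\cH_f(\dd)|^4 = 2^n\sum_{\xx,\aa,\bb}\zeta^{D_\bb D_\aa f(\xx)}$, is correct and is in substance the identity the paper also obtains (via the autocorrelation $\cC_f$ rather than by expanding the fourfold sum), and your forward direction for the fourth moment is fine. But the two steps you yourself flag as delicate are both genuinely broken. The step you call the main obstacle: none of your proposed arguments that $S(\xx):=\sum_{\aa,\bb}\zeta^{D_\bb D_\aa f(\xx)}$ is independent of $\xx$, or nonnegative, or equal to $2^{-n}\sum_\dd|\cH_f(\dd)|^2$, is true. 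Carrying out your own four-tuple computation with $\xx$ held fixed gives $S(\xx)=2^{-n}\,\zeta^{f(\xx)}\sum_\dd |\cH_f(\dd)|^2\,\overline{\cH_f(\dd)}\,(-1)^{\xx\cdot\dd}$, which for a general $f$ is a non-real quantity varying with $\xx$; it collapses to the constant $2^{n+s}$ precisely when $f$ is $s$-plateaued, which is what is to be proved. The idea your proposal is missing is the paper's: rewrite the pointwise condition as an identity of two functions of $\xx$, namely $F_1(\xx):=\sum_{\aa,\bb}\zeta^{f(\xx\oplus\aa\oplus\bb)-f(\xx\oplus\aa)-f(\xx\oplus\bb)} = 2^{n+s}\zeta^{-f(\xx)}=:F_2(\xx)$, and use injectivity of the Fourier transform: the condition holds for all $\xx$ iff $\widehat{F_1}(\uu)=\widehat{F_2}(\uu)$ for all $\uu$, and a change of variables gives $\widehat{F_1}(\uu)=|\cH_f(\uu)|^2\,\overline{\cH_f(\uu)}$ while $\widehat{F_2}(\uu)=2^{n+s}\,\overline{\cH_f(\uu)}$, so the condition is equivalent to $|\cH_f(\uu)|\in\{0,2^{(n+s)/2}\}$ for every $\uu$. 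This yields both directions of the first equivalence at once and never needs $S(\xx)$ to be constant a priori.

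Your Cauchy--Schwarz converse for the fourth-moment statement also does not close: from $2^{4n}=\bigl(\sum_\dd|\cH_f(\dd)|^2\bigr)^2\le N\sum_\dd|\cH_f(\dd)|^4=N\cdot 2^{3n+s}$ you obtain only $N\ge 2^{n-s}$; you have no grounds to assert that equality holds in Cauchy--Schwarz, and without equality you cannot conclude that the nonzero $|\cH_f(\dd)|^2$ are all equal. Two moment conditions alone do not force a two-valued distribution (nonnegative reals with sum $16$ and sum of squares $128$ need not all lie in $\{0,8\}$), so some further input --- e.g.\ a bound of the form $\sum_\dd|\cH_f(\dd)|^4\le 2^{2n}\max_\dd|\cH_f(\dd)|^2$ together with control of the maximum --- would be required. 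To be fair, the paper itself derives the fourth-moment statement by summing the pointwise identity over $\xx$ and does not spell out this converse either; but your proposal elevates the extremal argument to the main mechanism, and as written it does not work.
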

 \begin{proof}
Let $\xx\in\F_2^n$ be fixed.  
First observe that  
 \begin{align*}
 & \sum_{\aa,\bb\in\F_2^n} \zeta^{D_\bb D_\aa f(\xx)} =2^{n+s} \text{ is equivalent to}\\
&F_1(\xx):= \sum_{\aa,\bb\in\F_2^n} \zeta^{f(\xx\+\aa\+\bb)-f(\xx\+\bb)-f(\xx\+\aa)} =2^{n+s}\zeta^{-f(\xx)}=:F_2(\xx),
   \end{align*}
   which is further equivalent to their Fourier transforms being equal at all $\uu\in\F_2^n$, that is,
       \begin{align}
       \label{eq:id}
 \sum_{\xx,\aa,\bb\in\F_2^n} \zeta^{f(\xx\+\aa\+\bb)-f(\xx\+\bb)-f(\xx\+\aa)} (-1)^{\uu\cdot \xx}=2^{n+s}\sum_{\xx\in\F_2^n}\zeta^{-f(\xx)} (-1)^{\uu\cdot \xx}.
   \end{align}
   We compute the two expressions in~\eqref{eq:id}, separately. We start with the left hand side of~\eqref{eq:id}, and setting $\aa_1:=\xx\+\aa,\bb_1:=\xx\+\bb$, we obtain
   \allowdisplaybreaks
 \begin{align*}
 &\sum_{\xx,\aa,\bb\in\F_2^n} \zeta^{f(\xx\+\aa\+\bb)-f(\xx\+b)-f(\xx\+\aa)} (-1)^{u\cdot \xx}\\
 =&  \sum_{\xx,\aa_1,\bb_1\in\F_2^n} \zeta^{f(\xx\+\aa_1\+\bb_1)-f(\bb_1)-f(\aa_1)} (-1)^{\uu\cdot \xx}\\
 =&   \sum_{\bb_1\in\F_2^n} \zeta^{ -f(\bb_1) } (-1)^{\uu\cdot \bb_1}\
   \sum_{\aa_1\in\F_2^n} \zeta^{ -f(\aa_1)} (-1)^{\uu\cdot \aa_1}\\
   &\qquad\qquad\qquad\cdot \sum_{\xx\in\F_2^n} \zeta^{f(\xx\+\aa_1\+\bb_1) } (-1)^{\uu\cdot (\xx\+\bb_1\+\aa_1)}\\
  =&\overline{\cH_f(\uu)}\ \overline{\cH_f(\uu)}\,\cH_f(\uu) =|\cH_f(\uu)|^2\, \overline{\cH_f(\uu)}.
    \end{align*}
   The right hand side of~\eqref{eq:id} can be written as
  \begin{align*}  
   2^{n+s}\sum_{\xx\in\F_2^n}\zeta^{-f(\xx)} (-1)^{\uu\cdot \xx}
   =&2^{n+s}\overline{\cH_f(\uu)},
     \end{align*} 
     therefore \eqref{eq:id} is equivalent to
  $
|\cH_f(\uu)|^2\, \overline{\cH_f(\uu)}  =2^{n+s}\,\overline{\cH_f(\uu)},
 $
     that is, $|\cH_f(\uu)|\in\{0,2^{(n+s)/2}\}$. Our first claim is shown.
     
     Next,  using \cite[Theorem 1]{smgs} ({\em caution}: our generalized Walsh-Hadamard transform is not normalized), we see that
     \allowdisplaybreaks
  \begin{align*}  
  \sum_{\aa,\bb\in\F_2^n} \zeta^{D_bD_a f(\xx)}
  &=  \sum_{\aa,\bb\in\F_2^n} \zeta^{f(x\+\aa\+\bb)-f(x\+\bb)-f(\xx\+\aa)+f(\xx)}\\
  &= \sum_{\aa\in\F_2^n}  \zeta^{f(\xx) -f(\xx\+\aa)}\ \sum_{\bb\in\F_2^n}  \zeta^{f(\xx\+\aa\+\bb)-f(\xx\+\bb)}\\
  & {\substack{\cc:=\xx\+\bb\\\downarrow}\atop {=}}\sum_{\aa\in\F_2^n}  \zeta^{f(\xx) -f(\xx\+\aa)}\ \sum_{\cc\in\F_2^n}  \zeta^{f(\cc\+\aa)-f(\cc)}\\
 & = \sum_{\aa\in\F_2^n}  \zeta^{f(\xx) -f(\xx\+\aa)}  \cC_f(\aa),\text{ since $ \cC_f$ is always real}\\
 &=2^{-n}\sum_{\aa\in\F_2^n}  \zeta^{f(\xx) -f(\xx\+\aa)}   \sum_{\dd   \in \BBF_2^n}|\cH_f(\dd)|^2 (-1)^{\aa  \cdot \dd }\\
  & = 2^{-n}\sum_{\dd   \in \BBF_2^n}  |\cH_f(\dd)|^2 \sum_{\aa\in\F_2^n}  \zeta^{f(\xx) -f(\xx\+\aa)}  (-1)^{\aa  \cdot \dd }\\
   &=2^{-n}  \sum_{\dd   \in \BBF_2^n}  |\cH_f(\dd)|^2 \zeta^{f(\xx)}  (-1)^{\xx\cdot \dd} \sum_{\cc\in\F_2^n}  \zeta^{-f(\cc)}  (-1)^{\cc \cdot \dd }\\
   &= 2^{-n}\sum_{\dd   \in \BBF_2^n}  |\cH_f(\dd)|^2 \zeta^{f(\xx)} (-1)^{\xx\cdot \dd} \overline{\cH_f(\dd)}.
 \end{align*}
 Thus,
 \begin{align*}
  2^{2n+s}&=\sum_{\xx,\aa,\bb\in\F_2^n} \zeta^{D_\bb D_\aa f(\xx)}
  =2^{-n} \sum_{\xx,\dd   \in \BBF_2^n}  |\cH_f(\dd)|^2  \overline{\cH_f(\dd)} \zeta^{f(\xx)} (-1)^{\xx\cdot \dd}\\
  &=2^{-n} \sum_{\dd   \in \BBF_2^n}  |\cH_f(\dd)|^2  \overline{\cH_f(\dd)} \sum_{\xx   \in \BBF_2^n}  \zeta^{f(\xx)} (-1)^{\xx\cdot \dd}
  =2^{-n} \sum_{\dd   \in \BBF_2^n}  |\cH_f(\dd)|^4,
 \end{align*}
 and the second claim is shown.
\end{proof}
Our next corollary (see~\cite{Car03} for the classical counterpart) is immediate since a generalized bent function corresponds to a $0$-plateaued function.
\begin{corollary}  
A function $f\in\cGB_n^q$, $q=2^k$, is generalized bent if and only if $  \sum_{\aa,\bb\in\F_2^n} \zeta^{D_\bb D_\aa f(\xx)} =2^{n}$ if and only if  $\sum_{\dd   \in \BBF_2^n}  |\cH_f(\dd)|^4=2^{3n}$.
\end{corollary}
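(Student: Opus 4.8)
The plan is to convert the second-derivative sum into a Fourier identity on $\F_2^n$ and then read off $s$-plateauedness from the point-wise values of $|\cH_f|$. Fix $\xx\in\F_2^n$ and expand $D_\bb D_\aa f(\xx)=f(\xx)-f(\xx\+\aa)-f(\xx\+\bb)+f(\xx\+\aa\+\bb)$, so that, setting $\widetilde F(\xx):=\sum_{\aa,\bb\in\F_2^n}\zeta^{f(\xx\+\aa\+\bb)-f(\xx\+\bb)-f(\xx\+\aa)}$, we have $\sum_{\aa,\bb}\zeta^{D_\bb D_\aa f(\xx)}=\zeta^{f(\xx)}\widetilde F(\xx)$. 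Thus the first asserted identity (that $\sum_{\aa,\bb}\zeta^{D_\bb D_\aa f(\xx)}=2^{n+s}$ for every $\xx$) is equivalent to the equality of the two functions $\widetilde F(\cdot)$ and $2^{n+s}\zeta^{-f(\cdot)}$ on $\F_2^n$, and since the additive characters $\xx\mapsto(-1)^{\uu\cdot\xx}$ form a basis of $\C^{\F_2^n}$, this amounts to equality of their Fourier transforms $\sum_\xx(\cdot)(-1)^{\uu\cdot\xx}$ at every $\uu$. The change of variables $\aa_1=\xx\+\aa$, $\bb_1=\xx\+\bb$ (note that then $\xx\+\aa\+\bb=\xx\+\aa_1\+\bb_1$) splits $\sum_\xx\widetilde F(\xx)(-1)^{\uu\cdot\xx}$ into three independent sums equal to $\cH_f(\uu)\,\overline{\cH_f(\uu)}^2=|\cH_f(\uu)|^2\overline{\cH_f(\uu)}$, whereas $\sum_\xx 2^{n+s}\zeta^{-f(\xx)}(-1)^{\uu\cdot\xx}=2^{n+s}\overline{\cH_f(\uu)}$. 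Hence the derivative condition is equivalent to $|\cH_f(\uu)|^2\overline{\cH_f(\uu)}=2^{n+s}\overline{\cH_f(\uu)}$ for all $\uu$, i.e.\ to $|\cH_f(\uu)|^2=2^{n+s}$ whenever $\cH_f(\uu)\neq0$; this is precisely the statement that $f$ is $s$-plateaued, proving the first equivalence.

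For the fourth-moment version I would instead carry out the inner $\bb$-sum as an autocorrelation: the substitution $\cc=\xx\+\bb$ gives $\sum_\bb\zeta^{f(\xx\+\aa\+\bb)-f(\xx\+\bb)}=\cC_f(\aa)$, hence $\sum_{\aa,\bb}\zeta^{D_\bb D_\aa f(\xx)}=\sum_\aa\zeta^{f(\xx)-f(\xx\+\aa)}\cC_f(\aa)$. Inserting $\cC_f(\aa)=2^{-n}\sum_\dd|\cH_f(\dd)|^2(-1)^{\aa\cdot\dd}$ from the preliminaries and a further change of variables produces
\[
\sum_{\aa,\bb\in\F_2^n}\zeta^{D_\bb D_\aa f(\xx)}=2^{-n}\,\zeta^{f(\xx)}\sum_{\dd\in\F_2^n}|\cH_f(\dd)|^2\,(-1)^{\xx\cdot\dd}\,\overline{\cH_f(\dd)}.
\]
Summing over $\xx$ and using $\sum_\xx\zeta^{f(\xx)}(-1)^{\xx\cdot\dd}=\cH_f(\dd)$ gives the identity, valid for \emph{every} $f$,
\[
\sum_{\xx,\aa,\bb\in\F_2^n}\zeta^{D_\bb D_\aa f(\xx)}=2^{-n}\sum_{\dd\in\F_2^n}|\cH_f(\dd)|^4.
\]
If $f$ is $s$-plateaued, the first part makes the left-hand side equal to $2^{2n+s}$, whence $\sum_\dd|\cH_f(\dd)|^4=2^{3n+s}$.

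The converse of this last statement is the step I expect to be the main obstacle: from $\sum_\dd|\cH_f(\dd)|^4=2^{3n+s}$ the displayed identity only yields the \emph{averaged} equality $\sum_{\xx,\aa,\bb}\zeta^{D_\bb D_\aa f(\xx)}=2^{2n+s}$, which is a priori weaker than the point-wise identity required to invoke the first part. To close it I would combine the hypothesis with Parseval, $\sum_\dd|\cH_f(\dd)|^2=2^{2n}$: writing $v_\dd=|\cH_f(\dd)|^2\ge0$ and $N=|\supp(\cH_f)|$, Cauchy--Schwarz gives $2^{4n}=\big(\sum_\dd v_\dd\big)^2\le N\sum_\dd v_\dd^2=N\,2^{3n+s}$, so $N\ge 2^{n-s}$, with equality exactly when all nonzero $v_\dd$ coincide, in which case their common value is $2^{2n}/2^{n-s}=2^{n+s}$ and $f$ is $s$-plateaued. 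Everything therefore reduces to forcing $N\le 2^{n-s}$; the analytic Cauchy--Schwarz bound alone does not deliver this, so I would look for arithmetic leverage on the possible values $v_\dd=|\cH_f(\dd)|^2$ (they are totally non-negative algebraic integers of the $2$-power cyclotomic field $\Q(\zeta)$ with rational first and second moments) together with the rigidity of the extremal Cauchy--Schwarz configuration, and I expect this rigidity step to carry the weight of the proof. The stated generalized-bent corollary is then just the case $s=0$: there $\supp(\cH_f)$ is forced to be all of $\F_2^n$, the Cauchy--Schwarz bound is attained, every $|\cH_f(\dd)|^2=2^n$, and $f$ is gbent.
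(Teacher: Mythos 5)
Your proof of the corollary is correct, and the two computations at its core (turning the second-derivative sum at each $\xx$ into the pointwise identity $|\cH_f(\uu)|^2\,\overline{\cH_f(\uu)}=2^{n+s}\,\overline{\cH_f(\uu)}$, and the autocorrelation route to $\sum_{\xx,\aa,\bb}\zeta^{D_\bb D_\aa f(\xx)}=2^{-n}\sum_{\dd}|\cH_f(\dd)|^4$) follow essentially the same path as the paper's proof of the theorem from which the corollary is drawn; the paper then obtains the corollary simply by setting $s=0$. Where you genuinely diverge is in the converse of the fourth-moment criterion. You are right that the displayed identity only delivers the averaged equality $\sum_{\xx,\aa,\bb}\zeta^{D_\bb D_\aa f(\xx)}=2^{2n+s}$, which is a priori weaker than the pointwise statement needed to invoke the first equivalence; the paper's proof does not address this direction at all, so for general $s\geq 1$ an additional argument is genuinely required there. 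For the corollary itself, however, the obstacle you flag evaporates: your Cauchy--Schwarz/Parseval argument with $s=0$ gives $|\supp(\cH_f)|\geq 2^{n}$, which meets the trivial upper bound $|\supp(\cH_f)|\leq 2^n$, so equality holds in Cauchy--Schwarz, all values $|\cH_f(\dd)|^2$ coincide and equal $2^{2n}/2^{n}=2^{n}$, and $f$ is gbent. No further ``arithmetic rigidity'' is needed, since the extremal configuration is already forced. In short: your argument is complete for the statement at hand, and on the converse of the fourth-moment characterization it is more careful than the paper's own treatment; the extra Parseval step is exactly what makes the $s=0$ case (and only that case) close cleanly.
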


\noindent{\bf Acknowledgement.} The paper was started while the second author visited Western Norway University of Applied Sciences. This author thanks the institution for the excellent working conditions while the paper was being written.

\end{document}